\newcommand\definesymb[1]{%
\expandafter\newcommand\csname #1#1\endcsname{{\ensuremath{\mathbb{#1}}}}%
}
\newtheorem{theorem}{Theorem}
\newtheorem*{theorem2}{Theorem}
\newtheorem{definition}{Definition}[section]
\newtheorem{example}{Example}[section]
\newtheorem{proposition}{Proposition}
\newtheorem{fact}{Fact}
\newtheorem{corollary}{Corollary}
\title{Enumeration Reducibility in Closure Spaces with Applications to
 Logic and Algebra}
\author{Emmanuel Jeandel\\
    Université de Lorraine, CNRS, Inria, LORIA, F 54000 Nancy, France
  }
\begin{document}
\maketitle
\begin{abstract}
  In many instances in first order logic or computable algebra, classical theorems
  show that many problems are undecidable for general structures, but become
  decidable if some rigidity is imposed on the structure.
  For example, the set of theorems in many finitely axiomatisable theories is
  nonrecursive, but the set of theorems for any finitely axiomatisable \emph{complete}
  theory is recursive.
  Finitely presented groups might have an nonrecursive word problem, but
  finitely presented \emph{simple} groups have a recursive word problem.

  In this article we introduce a topological framework based on closure spaces
  to show that many of these proofs can be obtained in a similar setting. We will
  show in particular that these statements can be generalized to cover arbitrary
  structures, with no finite or  recursive presentation/axiomatization.
  This generalizes in particular work by Kuznetsov and others.

  Examples from first order logic and symbolic dynamics will be discussed at length.
\end{abstract}	

\section*{Introduction}

When one deals with sufficiently complicated algebraic structures, it is
customary to see uncomputability results for many questions:
A finitely presented semigroup with an undecidable word problem was presented by
Post \cite{Post}. The same result was obtained for groups by Novikov
\cite{Novikov} and Boone \cite{BooneV, BooneVI}, and for division rings by
MacIntyre \cite{MacIntyre}.

If we add however some hypothesis on the structure, we are sometimes able to
prove that the problem becomes decidable. This is the case for 
instance of finitely presented \emph{simple} groups \cite{BooneHigman}.
More generally, Kuznetsov \cite{Kuznetsov} proved that (using the vocabulary of Maltsev \cite[Theorem
4.2.2]{Maltsev}), every simple finitely presented algebra is constructive.

There are however other structures that are not algebras, 
where the exact same situation happens.
As an example, there are unsolvable finitely axiomatisable first-order theories
(for instance Robinson's Q \cite{RobinsonQ}). By contrast, any \emph{complete}
finitely axiomatisable theory is decidable (Folklore).

A similar situation happens in multidimensional symbolic dynamics, i.e. the
study of tilings of the plane: 
Some subshifts of finite type have an uncomputable language in
dimension 2 \cite{Robinson}. However \emph{minimal} subshifts of finite type have a computable
language \cite{Hochman2,ballier2008}.

The goal of this paper is to provide an unifying framework in which all the
decidable results will be seen as an instance of the same theorem.
We will use  the vocabulary of topology rather than algebras. The main reason to do
this is that using topology is already sufficient to obtain the main
theorems, and we are able this way to obtain results on structures
which cannot be made easily into algebras (in the sense of Maltsev).

\vskip 2mm 
The main idea is to put a structure on the set of all theories in a given
language, or on the set of all groups with a given generating set, or on the set
of all subshifts over a given alphabet. This
structure will be called a \emph{quasivariety} by analogy with universal
algebra. In the case of finitely generated groups, this structure is known as
the space of marked groups \cite{Grigorchuk}.

For this structure, finitely presented groups, finitely axiomatizable theories,
subshifts of finite type will correspond to the same objects, that we call 
\emph{finitely presented} points, using the vocabulary from algebra.

In this structure, simple groups, complete theories and minimal subshifts all
correspond to the same objects, \emph{maximal} points.

Our main theorem states that finitely presented maximal points in a
given quasivariety are computable, thereby generalizing all previous theorems.

By doing this abstractly we will see that it is actually possible to drop the
hypothesis that the points are finitely presented, and obtain a more general
statement when the objects in our quasivariety are not supposed to have a finite
presentation, or even a recursive presentation. Of course in this case, the
points will be unlikely to be computable. Our main result is that, in a maximal
point, false statements are \emph{enumeration-reducible} to true statements: Given an
enumeration of all statements that are true in the structure, we are able to
deduce all statements that are not true: we obtain negative
information about the structure from positive information.
\vskip 3mm

Roughly speaking, a set $A$ is enumeration reducible \cite{FriedbergRogers,Odifreddi2} to a set $B$, if
an enumeration of $A$ can be obtained in some effective way from any enumeration of $B$.
It is not surprising that this concept has an important role here, as it already
has been applied successfully, in particular in the context of groups.
We know for example that for every enumeration degree $d$, there exists a
finitely generated group for which the word problem has enumeration
degree $d$ (the reduction is actually stronger, Dobritsa  \cite[Theorem 2.4]{Belegradek}, see also
\cite{Ziegler1}) or that
enumeration reducibility characterizes, given a group $H$, 
when a group $G$ can be embedded into a group that is finitely presented over  $H$ (C.F. Miller, see \cite[Chapter 6]{HigmanScott}).
This notion has also been used in a more general context by Belegradek
\cite{Belegradek2}, and our first easy propositions about
presentations are reflected in this article.

We note in passing that various other reductions have been used in
conjunction with algebraic objects, in particular quasi-reducibility
\cite{DFN}
and Ziegler-reducibility \cite{Ziegler2}. However many of our theorems
have converses (see in particular Theorem~\ref{thm:conv}) which suggest enumeration-reducibility is indeed the
right notion in our context.

\vskip 3mm

The article is organized as follows: We first define the main concept of a
quasivariety (and its associated  closure space) in the first section.
In the next section, we will be interested in computability properties of all
possible presentations of a point.  The remaining sections are concerned with
the main theorem, namely the concept of a maximal points, and computability
properties of maximal points, and generalizations of maximal points.
The article ends with a discussion on which other properties one might try to
capture in this framework.

There are three main examples used in this paper: first-order logic, symbolic
dynamics and finitely generated groups.
We will focus on the first two, and the results for finitely generated groups
will be given in appendix.
As our results are proven in a full generality, it is quite likely that stronger
computability statements can be proven in some particular examples. This happens
for first-order logic, where we can obtain a much stronger result for the main
theorem. For the quasivariety of subshifts, various recent results show however
that our general results are actually tight.

\subsection*{Related work}

While our approach is based on closure spaces, it is of course not the
only way to obtain an unifying framework for different structures.

One possibility would be to use universal algebra. Many of the results
presented here for finitely presented algebras can be found in the
work of Maltsev and Kuznetsov \cite{Maltsev,Kuznetsov}. While
recursive properties have been investigated from the start, the
concept of enumeration-reducibility only appear in Belegradek
\cite{Belegradek2}, and is not used in its full generality.
Universal algebra however suffer from the fact that many theories we
are interested here do not fit easily in the setting. This is the case
for first order theory, and for subshifts. Without going into much
details, one of the problem with the theory of subshifts, that we
don't solve here, is how to obtain a good notation of the restriction of a subshift
over an alphabet $A$ to a subshift over an alphabet $B \subseteq
A$. In terms of universal algebra, if a subshift is generated by three
generators $a,b,c$, it is not clear what is the object generated only
by $a$ and $b$, and in particular if it is also a subshift.

Other approaches using category theory or model theory suffer from the
same problems.

One possible solution is to use Stone Duality, and to associate to
each object (group, subshift, theory, etc.) a topological space and/or
a boolean algebra. We do now know if it is possible to use this
connection to obtain our theorems here.

\section{Definitions}

\subsection{Quasivarieties}
We will assume rudimentary notions of computability theory, in
particular the notion of a recursive set of integers, a recursive
function, and the concept of a recursively enumerable set. See
\cite{Odifreddi} for details.

Let $I$ be an infinite recursive set, that we identify with the set of integers.
In applications, $I$ will be the set of finite words over a given finite
alphabet, or the set of formulas in a given finite signature.

In this article, we will always identify a subset $X \subseteq I$ and
a point $x \in \{0,1\}^I$.

We are interested in subsets $X \subseteq I$ that can be defined by
some Horn formulas , i.e. by axioms of the type:
\[ a \in X \wedge b \in X \wedge .. \wedge c \in X \rightarrow z \in X\]

In the vocabulary of Higman, these are called \emph{identical
  implications}.

In what follows, we will be given such a collection of formulas, and we
will look at the set of all $X$ that satisfy all formulas of the collection.

\begin{definition}
	Let $S$ be a recursively enumerable set of finite sequences of elements of $I$.
	
	A word $x \in \{0,1\}^I$ satisfies $S$ if for all $(n_0,
	n_1, \dots n_k)\in S$,	
\[	x_{n_1} = 1 \wedge x_{n_2} = 1 \wedge \dots x_{n_k} = 1 \rightarrow x_{n_0} = 1 \]
Equivalently, a set $X \subseteq I$ satisfies $S$ if for all $(n_0,
n_1, \dots n_k) \in S$:
\[	n_1 \in X \wedge n_2 \in X \wedge n_k \in X \rightarrow n_0 \in X \]
The  \emph{quasivariety} $V$ defined by $S$ is the set of all words $x$
 (or all subsets $X \subseteq I$) that satisfy $S$.
 
\end{definition}
The fact that  $S$ is recursively enumerable is not
mandatory, but happens in all interesting examples.
This assumption can be dropped in almost all theorems, to obtain
relativized versions of the theorems, by replacing all statements of the
form $X \leq_e Y$ by $X \leq_e Y \oplus S$, where $\oplus$ is the disjoint union: $A \oplus B = \{ (0,x), x \in A\} \cup \{ (1,x), x \in B\}$.

Before giving more properties of quasivariety, we will give a few alternate definitions.

First it is easy to see from the definitions that $V$ can be given the
structure of a topological space, by inheriting the natural
(product/Tychonoff) topology on $\{0,1\}^I$. 
For this topology $V$ is topologically closed, and even compact.
As the set $S$ of formulas that define $V$ is recursively enumerable,
$V$ is actually effectively closed (i.e. $V$ is a  $\Pi_1^0$ class) \cite{CenzerRec,Cenzerbook}.

\begin{definition}
  A set $X \subseteq \{0,1\}^I$ is effectively closed if there exists a
recursively enumerable set ${\cal F} = \{ f^i\}_{i \in \mathbb{N}}$ of
partial finite maps (i.e. $f^i \in \{0,1\}^{F_i}$ with $F_i$ finite), s.t. 
$X$ are exactly the points of $\{0,1\}^I$ that disagree with every element of ${\cal F}$:
\[
x \in X \iff \forall i \in \mathbb{N}, \exists j \in F_i, x_j \not= f^i_j
\]
\end{definition}  

\begin{fact}
  $V$ is effectively closed.
\end{fact}
\begin{proof}
  For each $(n_0, n_1 \dots n_k) \in S$, consider the map $f$
  defined by $f_{n_1} = f_{n_2} = \dots f_{n_k} = 1$ and $f_{n_0} = 0$.
  As $S$ is recursively enumerable, the set ${\cal F}$ of all functions we
  obtain this way is recursively enumerable, and proves that $V$ is effectively closed.
\end{proof}

The quasivariety $V$ has also the structure of a complete semi-lattice, as evidenced by
the following easy facts:

\begin{fact}\label{fact:clo}
Let $V$ be a quasivariety. Then any intersection of elements of $V$ is
again in $V$.
In particular:
\begin{itemize}
	\item $V$ contains a minimal element, the intersection of all elements of $V$.
	\item $V$ contains a maximal element, the set $I$ itself.
	\item For any set $Y \subseteq I$ there exists a smallest element
	  $X$ of $V$ that contains~$Y$.
\end{itemize}	  
\end{fact}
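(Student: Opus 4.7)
The plan is to establish the main closure claim first (arbitrary intersections of elements of $V$ remain in $V$) and then read off the three bulleted consequences from it. The key observation is that the defining axioms of $V$ are universally quantified Horn implications: each tuple $(n_0,n_1,\dots,n_k)\in S$ forces membership of $n_0$ whenever all of $n_1,\dots,n_k$ are members. Such formulas are preserved under intersection for purely logical reasons, and that is the only technical content.

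Concretely, let $\{X_i\}_{i\in J}\subseteq V$ be any family and set $X=\bigcap_{i\in J}X_i$. To verify that $X$ satisfies $S$, I would fix an arbitrary tuple $(n_0,n_1,\dots,n_k)\in S$ and assume $n_1,\dots,n_k\in X$. By definition of the intersection, $n_j\in X_i$ for every $i\in J$ and every $1\le j\le k$; since each $X_i$ satisfies $S$, the corresponding implication yields $n_0\in X_i$ for every $i$, hence $n_0\in X$. This is exactly the required implication, so $X\in V$.

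From this the three bullets are immediate. For the maximal element, observe that $I\in V$ trivially: every implication of the form $n_1\in I\wedge\dots\wedge n_k\in I\implies n_0\in I$ has the conclusion automatically true because $n_0\in I$. In particular $V$ is nonempty, so the intersection $\bigcap_{X\in V}X$ is well-defined and, by the main claim applied to the family of all elements of $V$, lies in $V$; this is the minimal element. For the last bullet, given $Y\subseteq I$, the collection $\{X\in V:Y\subseteq X\}$ is nonempty (it contains $I$), and its intersection is again in $V$ by the main claim, contains $Y$ by construction, and is clearly contained in every element of $V$ that contains $Y$.

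There is no real obstacle here: the argument is the standard proof that Horn formulas define intersection-closed classes, as in the classical Moore/closure-system setup behind closure spaces and universal algebra. The only point that deserves a word of care is the nonemptiness used implicitly when forming ``the intersection of all elements of $V$'', but this is handled by first observing $I\in V$.
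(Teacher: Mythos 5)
Your proof is correct and is exactly the standard argument the paper has in mind (the paper states this as an easy fact without giving a proof): Horn implications are preserved under arbitrary intersections, $I \in V$ trivially, and the nonemptiness issue when intersecting over all of $V$ (or over all elements containing $Y$) is properly handled by noting $I \in V$. Nothing to add.
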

Note that a complete semi-lattice is also a complete lattice, where we
define the meet of $X$ and $Y$  to be the smallest element of $V$ that contains $X \cup Y$.

This is a characterization in the following sense:
\begin{theorem}[Alternate definition 1]
	\label{thm:pi01}
	An effectively closed set $S \subseteq \{0,1\}^I$ is a quasivariety iff it
	contains $I$ and is closed under (finite) intersections (if $X\in S$ and
        $Y\in S$ then $X \cap Y \in S$).
\end{theorem}	
We defer the proof of this theorem to the appendix, to not deviate
from the narrative.

Note however that this characterization does not mean that we are
actually investigating complete lattices that are effectively closed when seen as subsets
of $\{0,1\}^I$.
Indeed we are not interested in the lattice $V$ itself but in the pair
$(\{0,1\}^I, V)$, i.e. in how $V$ behaves as a subset of the
surrounding set $\{0,1\}^I$.

\subsection{The closure operator}
We will now give another characterization from the point of view of deductive
systems.

\begin{definition}
  For a quasivariety $V$ over a set $I$, the closure operator ${\cal C}$ associated to $V$ is the
  map from $\{0,1\}^I$ to $V$ that sends a set $R$ to the smallest set in $V$ containing $R$.
\end{definition}
This is well defined due to  Fact~\ref{fact:clo}.
Intuitively, if we see $R$ as a set of axioms, then ${\cal C}(R)$ is the set of all
consequences of $R$.

The vocabulary ``closure operator'' comes from the fact that this indeed makes $I$ a closure space:

\begin{definition}[\cite{Closure}]
  A pair $(I,C)$, where $C$ is a map from $\{0,1\}^I$ to $\{0,1\}^I$ is a closure
  space if
  \begin{itemize}
  \item For all $R \subseteq I$, $R \subseteq C(R)$.
  \item $C$ is idempotent: For all $R \subseteq I$, $C(C(R)) = C(R)$.
  \item $C$ is monotone: For all $A,B \subseteq I$, if $A \subseteq B$
    then $ C(A) \subseteq C(B)$.
  \end{itemize}
  A closure space is a Tarski space\cite{Tarski1,Tarski2} if additionally $I$ is countable and $C$ is finitary: For all $R \subseteq I$, if $x \in C(R)$ then there exists a finite $R' \subseteq R$ s.t. $x \in C(R')$.
\end{definition}  
To be accurate Tarski \cite{Tarski1} assumes further properties from the space, in particular that the set $I$ 
  itself is finitely presented (see below for what it means). All of our
  examples satisfy this assumption, and many, but not all, of our
  theorems, have it as a hypothesis.

It is customary in logic to write $X \models y$ instead of $y \in C(X)$ and we
will use this notation in a few proofs, in particular for the quasivariety of theories.
  
    \begin{fact}
Let $V$ be a quasivariety, and ${\cal C}$ its closure operator. Then $(I,{\cal C})$ is a Tarski space.
    \end{fact}
    
This gives another definition:
\begin{theorem}[Alternate Definition 2]
  Let $(I,C)$ be a Tarski space and $V$ the image of $C$.
  $V$ is a quasivariety iff $I$ is recursive and $V$ is effectively closed.
\end{theorem}
\begin{proof}
One direction is easy. Now suppose $I$ is recursive and $V$ is
effectively closed.
First, by idempotency, $V$ is exactly the set of points $X$ s.t. $C(X) = X$.
  
By the first axiom of closure space, $C(I) \supseteq I$ and therefore $C(I) = I$
and $I \in V$.

Let $X,Y \in V$, i.e. $C(X) = X$ and $C(Y) = Y$.
Then by monotonicity, $C(X \cap Y) \subseteq C(X) = X$,
and similarly $C(X \cap Y) \subseteq Y$. We therefore deduce $C(X \cap Y) \subseteq X \cap Y$
and finally $C(X \cap Y) = X \cap Y$, therefore $X \cap Y \in V$.

The result then follows from the previous theorem.
  
\end{proof}  
Note that the map ${\cal C}$ is usually not recursive, but as we will
see in the next section, it is given by a enumeration operator.

This is the last time we mention the notion of closure space, and we
will use vocabulary relevant to algebra rather than topology in the
following. Table~\ref{table:dic} gives a correspondence between the vocabularies.
\begin{table*}
  \begin{center}
\begin{tabular}{c|c}
Quasivarieties $V$  & Tarski spaces\\
\hline
$X$ is a point of $V$ & $X$ is a (deductive) system\\
$R$ is a presentation of $X \not= I$ & $R$ is consistent \\
$X \in V$ is maximal, $X \not= I$ & $X$ is maximally consistent\\
$X$ is finitely presented & $X$ is finitely axiomatizable/$X$ is compact\\
$I$ is finitely presented & $(I,V)$ is compact\\
\end{tabular}
\end{center}
\caption{Dictionary between quasivarieties (as defined in this paper) and Tarski spaces}
\label{table:dic}
\end{table*}

\section{Examples}

Before proceeding to the main definitions and theorems, we will give a
few running examples. The first three examples are central.They are
actually examples of families of quasivarieties, as each of them is
parametrized by some finite set (a signature, or a finite alphabet).

\subsection{The quasivariety $V_{FO,\tau}$ of first-order theories}

Let $\tau$ be a finite signature.  The set of all first order theories
over the signature $\tau$ can be given a structure of a quasivariety
$V_{FO,\tau}$.  (Here we define a theory as the set of all logical consequences of some (possibly empty) set of axioms).
	
Indeed, let $I$ be the set of all formulas over the signature
$\tau$, and let $S$ be the set of all formulas $(\phi_0, \phi_1,
\dots \phi_k)$ s.t. $\phi_0$ is a consequence of $\phi_1 \dots \phi_k$.
This set $S$ is indeed recursively enumerable. This defines a
quasivariety $V_{FO,\tau}$.

Any theory is immediately a point of the quasivariety $V_{FO,\tau}$, and 
by Gödel completeness theorem for first order logic, all points of
the quasivariety $V$ are indeed theories (closed under logical
consequence): if $\phi$ is a consequence of the formulas of $X \in
V_{FO,\tau}$, then $\phi$ is a consequence of finitely many formulas of $X$,
and therefore in $X$ itself.
	 
The quasivariety $V_{FO,\tau}$ of theories contains two particular points: the point
$X$ consisting of all tautologies and the point $X = I$ of all
formulas (i.e. the inconsistent theory).

Notice that we are defining different quasivarieties $V_{FO,\tau}$ depending on the
signature $\tau$. As we discuss computability, we assume here that
$\tau$ is finite. The case where $\tau$ is countably infinite can be
handled with some care.

If $\cal F$ is a fragment of the first-order language (for example if
$\cal F$ is the set of all universal sentences), it is also possible
to restrict the quasivariety to $V_{\cal F,\tau}$ by restricting to
formulae in $\cal F$. This example may be useful later on.

\subsection{The quasivariety $V_{sym}$ of subshifts}

  Let $\Sigma$ be a finite alphabet.  
A subset $Y$ of $\Sigma^\mathbb{Z}$ is called a subshift
\cite{LindMarcus} if it is
topologically closed and invariant under translation.
$Y$ is entirely characterized by the set $X$ of all \emph{forbidden}
words, i.e. finite words that do not appear in any word of $Y$.
For example if $\Sigma = \{0,1\}$ and $Y = \{ \dots 000 \dots, \dots 00100 \dots
\}$ is the set of all words with at most one occurence of the symbol $1$, then
the set of forbidden words $X$ of $Y$ is exactly the set of all words that contains
at least two occurences of the symbol $1$.

The quasivariety of all subshifts over $\Sigma$ will not be given with subshifts
as points, but with forbidden languages as points. This is of course equivalent.

A set $X$ of words over $\Sigma$ is the forbidden language of a subshift if it
is extensible and factorial, that is:
\begin{itemize}
	\item For any letter $a$, if $w \in X$ then $aw \in X$
	\item For any letter $a$, if $w \in X$ then $wa \in X$
    \item If $wa \in X$ for all letters $a$, then $w \in X$.
    \item If $aw \in X$ for all letters $a$, then $w \in X$.
\end{itemize}
By taking all these Horn formulas as our set $S$, we see that the set of
of subshifts over $\Sigma$ is a quasivariety $V_{sym}$ defined over
the set $I = \Sigma^\star$, the set of all finite words.

The quasivariety $V_{sym}$ of subshifts contains two particular points: the point
$X = \emptyset$ (which corresponds to the subshift $Y =
\Sigma^\mathbb{Z}$) and the point $X = \Sigma^\star$ (which
corresponds to the subshift $Y = \emptyset$).

Similar definitions can be given for higher dimensional subshifts
(i.e. subsets of $\Sigma^{\mathbb{Z}^d}$), which are again
characterized by the set of finite patterns that do not appear in them.

\subsection{The quasivariety $V_{grp}$ of finitely generated groups}

Let $n$ be an integer.
The set of all groups with $n$ generators may be seen as a quasivariety.
Indeed, a group $G$ with $n$ generators can be seen (up to isomorphism) as a quotient
of the free group $\mathbb{F}_n$, or equivalently as a normal subgroup
$R$ of $\mathbb{F}_n$ (the subgroup $R$ corresponds to the word problem
of $G$, i.e. all combinations of generators of $G$ that are equal to
the identity).

Indeed, a set $X \subseteq \mathbb{F}_n$ is a normal subgroup of
$\mathbb{F}_n$ (i.e. codes a group) if:
\begin{itemize}
	\item If (nothing) then $1 \in X$.
	\item If $g \in X$ then $g^{-1} \in X$.
	\item If $g \in X, h \in X$ then $gh \in X$.
	\item For any $h$, if $g \in X$ then $hgh^{-1} \in X$.
\end{itemize}	
For example, for $\mathbb{Z}^2 = \{a,b | ab=ba\}$, we have $a \not\in
X$, $b \not\in X$ but $aba^{-1}b^{-1} \in X$, $aaba^{-1}b^{-1}a^{-1} \in
X$, and more generally, $X$ is exactly the set of words of the free
group for which the number of occurences of $a$ is equal to the number
of occurences of $a^{-1}$, and the same for $b$ and $b^{-1}$.

By taking all previous Horn formulas as our set $S$, we see that the set of
of groups with $n$ generators is a quasivariety $V_{grp}$ defined over
the set $I = \mathbb{F}_n$

The quasivariety contains two particular points: the point
$X = \{1\}$ (which corresponds to the group $G = \mathbb{F}_n$)
and the point $X = \mathbb{F}_n$ (which corresponds to the one-element
group).

This particular quasivariety is usually called the space of \emph{marked
groups}, see Grigorchuck \cite{Grigorchuk}.

\subsection{The quasivariety  $V_{\mathbb{R}}$ of closed subsets of
  $\mathbb{R}$}

As quasivarieties can be defined by closure operators, it is of course
natural that topologies can be made into quasivarieties in our sense.
The astute reader might for example see how to twist the definitions
of the quasivariety of subshifts to obtain the quasivarieties of all
closed sets of the cantor space $\{0,1\}^\mathbb{N}$.

We will look at another example, the set of all closed subsets of
$\mathbb{R}$.
Similarly to the example of subshifts, it will be more natural to see
a closed subset by the set of intervals that do not intersect it.

More precisely let $I = \{ (p,q) \in  \mathbb{Q} \times \mathbb{Q}, p
< q\}$. Elements $(p,q)$ of $I$ are to be understood as open intervals
$]p,q[$.

Let $K$ be a closed subset of $\mathbb{R}$.
Then $X$ is the set of all intervals that do not intersect $K$ if and
only if:

\begin{itemize}
  \item If $]p,q[ \in X$ and $r \in ]p,q[$, then $]p,r[ \in X$  and $]r,q[ \in X$
  \item If $]p,r[ \in X$ and $]s,q[ \in X$ and $p \leq s < r$ then $]p,q[ \in X$
\end{itemize}  
The first property states that if $D_1 \subseteq D_2$ are intervals
and $D_2$ does not intersect $K$, then $D_1$ does not intersect $K$.
The second property states that if $D_1$ and $D_2$ are two overlapping
intervals that do not intersect $K$, then  $D_1 \cup D_2$  does not intersect $K$.

By taking all previous Horn formulas as our set $S$, we see that the set of
all closed subsets of $\mathbb{R}$ is a quasivariety $V_{\mathbb{R}}$.

The quasivariety contains two particular points: the point
$X = \emptyset$ (which corresponds to the closed set $\mathbb{R}$)
and the point $X = I$ (which corresponds to the closed set $\emptyset$)

We note that this construction is quite general, and will work in many
topological spaces as long as they are computable in some sense
(TODO), by taking for $I$ a suitable countable basis.
We note however that the closed subsets of the Baire Space
$\mathbb{N}^\mathbb{N}$ cannot be
put in the natural way into a quasivariety, due to
$\mathbb{N}^\mathbb{N}$ not being relatively compact.

\section{Presentations}

\begin{definition}
  A \emph{presentation} of $X \in V$ is a set $Y$ so that $X = {\cal C}(Y)$.
  
$X$ is \emph{finitely presented} if it admits a finite presentation.
$X$ is \emph{recursively presented} if it admits a recursively enumerable presentation.
\end{definition}
The vocabulary of presentations comes from algebra. From the point of view of logic, we could see
$Y$ as an axiomatization of the theory $X$.

\begin{example}
  \begin{itemize}
    \item A finitely presented theory is usually called a finitely axiomatisable
theory: It is a theory that can be given by finitely many axioms.

A recursively presented theory is usually called a recursively axiomatisable theory.

\item A finitely presented subshift is usually called a subshift of finite
type: It is a subshift given by a finite list of forbidden words
(or forbidden patterns, in higher dimensions). As an example the set $Y$ of all
biinfinite words over the alphabet $\{0,1\}$ where every symbol $0$ is preceded by a symbol $1$ is a subshift of finite
type and given by the set of forbidden words $\{00\}$. The set $Y$ of all
biinfinite words
with at most one symbol $1$ is \emph{not} of finite type.

A recursively presented subshift is usually called an effectively
closed subshift.

\item Our vocabulary coincides with the vocabulary from group theory,
  so that finitely (resp. recursively) presented groups in the quasivariety of groups are
  exactly the finitely (resp. recursively) presented groups

\item Finitely presented closed subsets of $\mathbb{R}$ are
  complements of finite union of open (finite) intervals with rational coefficients.
  Equivalently finitely presented closed subsets of $\mathbb{R}$ are
  of the form $]-\infty, r] \cup K \cup [s,+\infty[$ where $K$ is a
  finite union of closed intervals (with rational coefficients) and
  $r,s \in \mathbb{Q}$.
  In particular the empty set is not finitely presented.
        
  Recursively presented closed subsets of
  $\mathbb{R}$ are usually called $\Pi_1^0$ subsets of $\mathbb{R}$,
  or effectively closed sets.
\end{itemize}
\end{example}
In what follows, we are interested in what computability properties on a presentation of $X$ transfer to
$X$.
It is well known for example that the set of true formulas of a finitely axiomatisable (or
recursively axiomatisable) theory is recursively enumerable, but what happens if our theory is not
finitely axiomatisable, or even worse, not recursively axiomatisable?

It turns out that the good notion to use in such a context is the notion of enumeration
reducibility.
While the formal definition is a bit cumbersome, the intuitive idea is as follows:
A set $A$ is enumeration-reducible to a set $B$ if there is some procedure that can enumerate the
elements of $A$ given \emph{any}  enumeration of the elements of $B$. By an enumeration of $B$ we
mean any possible (typically non recursive) way to give elements of $B$ one at a time.
It is not required that the elements of $A$ are enumerated in a specific order, and this order
will actually usually depend on the order in which the elements of $B$ are given.

\begin{definition}[\cite{FriedbergRogers,Odifreddi2}]
Let $A,B$ two subsets of $I$.

We say that $A$ is enumeration reducible to $B$, written ${A \leq_e B}$,
if there exists a partial recursive function ${f :  \mathbb{N} \times I
\rightarrow P_f(I)}$, where $P_f(I)$ is (a recursive encoding of) the
finite subsets of $I$,  so that:
\[ x \in A \iff \exists n, f(n,x)\text{ is defined and }f(n,x) \subseteq B\]
	
When we want to emphasize the function $f$, we will say that 
 $A$ is enumeration-reducible via $f$, written $A \leq_e^f B$.
 (Note that $A$ is uniquely defined given $f$ and $B$).
\end{definition}
To simplify notations, we will write ``$f(n,x) \subseteq B$'' as a shortcut for ``$f(n,x)$ is defined and  $f(n,x) \subseteq B$.''

The intuition is that, for each $n$, $f(n,x)$ is a (finite) set that witnesses
that $x \in A$. If $f(n, x)$ is defined for no value of $n$, there is no possible witness that $x \in A$.

Intuitively, if we are given the elements of $B$ one at a time, we can determine if $x \in A$ by
enumerating all (possibly infinitely many) possible sets of witness $f(n,x)$ and see if one of them contains only elements that
we know are in $B$. If it is not the case, we wait until we get more elements of $B$, or more
possible sets of witnesses.

In particular if $A \leq_e B$ with $B$ recursively enumerable, then
$A$ is recursively enumerable.

Notice that, at any instant in an enumeration of $B$, we obtain some information that some elements
are in $B$, but no information on which elements are not  in $B$. Somehow, $A \leq_e B$ means that we are
able to enumerate $A$ given only \emph{positive information} on $B$.
In particular the complement $\overline{B}$ of $B$ is usually not enumeration reducible to $B$.
Similarly $A \leq_e B$ does not imply that $\overline{A} \leq_e \overline{B}$.

\begin{proposition}
	\label{prop:rec}
        Let $V$ be a quasivariety and ${\cal C}$ its corresponding closure operator.

Then  ${\cal C}(Y) \leq_e Y$ for any $Y \subseteq I$.        
More precisely there exists a partial recursive function $f$ (depending only
on $V$) so that ${\cal C}(Y) \leq_e^f Y$ for any $Y \subseteq I$.

In particular, a point $X$ is enumeration-reducible to any of its presentations.
In particular a finitely/recursively presented point is recursively
enumerable (as a subset of $I$).
\end{proposition}
From the point of view of logic system, this means we can enumerate the set of all consequences of
$Y$ from an enumeration of $Y$.

In particular it means that the set $X$ itself is the smallest presentation of the point $X$ in
terms of information content, as any other presentation can compute this particular presentation.

\begin{example}
The set of valid formulas in finitely and recursively axiomatisable
theories is recursively enumerable.

Subshifts of finite type and effectively closed subshifts have a
recursively enumerable set of forbidden words.
\end{example}
\begin{proof}
Let $V$ be defined by a set $S$, and $S'$ be the set of all statements of the form
\[ a \in X \wedge b \in X \wedge .. \wedge c \in X \rightarrow z \in X\]
that are logical consequences of (finitely many) statements of~$S$.

As $S$ is recursively enumerable, it is easy to see that $S'$ is also recursively enumerable, and of
course define the same quasivariety\footnote{It is easy to see that (if we drop the hypothesis for the
  base set $S$ to be recursively enumerated) the set of all 
  quasivarieties $V$ over a set $I$ may be given itself  the structure of a
  quasivariety $\cal V$, where each quasivariety $V \in {\cal V}$ is identified
  with the set $S'(V)$ of all implications true in any point of the
  quasivariety. The map $S \rightarrow S'$ is of course the closure operator in this new quasivariety.}.
Note that $S'$ contains in particular all implications of the form ``$a \in X \rightarrow a \in X$''.

 On input $a \in I$, consider all finite sets $F$ s.t. the statement  ``$\bigwedge_{i \in F} i \in X \rightarrow a \in X$''
 is a statement of $S'$.
 As $S'$ is recursively enumerable, we can recursively enumerate all such statements, i.e.  there is
 a partial recursive function ${f: I \times \mathbb{N} \rightarrow P_f(I)}$ such that $f(a,\mathbb{N})$ is
 exactly the set of all possible $F$.

Then it is clear that $A \leq_e^f B$ iff $A = {\cal C}(B)$.
\end{proof}
Therefore there is a procedure $f$ that, given any enumeration of any presentation
of $X \in V$, gives an enumeration of $X$. This means that the closure operator, while not
computable in a traditional sense, is computable as an enumeration operator.

With this theorem, we get a new definition of a quasivariety
\begin{theorem}[Alternate Definition 3]
  Let $I$ be a recursive set and $(I, {\cal C})$ is a Tarski space.

    Then the image of ${\cal C}$ is a quasivariety iff there exists $f$ s.t.
    ${\cal C}(Y) \leq_e^f Y$ for all $Y \subseteq I$.
\end{theorem}  
So quasivariety are closure spaces where the closure operator is computable as an enumeration operator.

\begin{proof}
  By definition, recall that $A \leq_e^f B$  means that ${x \in A \iff \exists n, f(n,x) \subseteq B}$.

Consider the set $S$ of all Horn formulas 
  \[  \bigwedge_{j \in f(n,i)} j \in X \rightarrow i \in X  \]
for all $n,i$ whenever $f(n,i)$ is defined.
$S$ is clearly recursively enumerable
  and defines a quasivariety $V$.
  It remains to show that $V$ coincides with the image of ${\cal C}$.

  Suppose that $X$ is in the image of ${\cal C}$, i.e. $X = {\cal C}(X)$ and therefore
  $i \in X  \iff \exists n, f(n,i) \subseteq X$. Keeping only one direction of this equivalence, we
  get that for all $n$ and all $i$, $[f(n,i) \subseteq X  \Rightarrow i \in X]$. Therefore  $X \in V$ by definition of the   quasivariety $V$.

  Conversely suppose that $X \in V$ and let $Y = {\cal C}(X)$.
  As $Y \leq_e^f X$ we know that $i \in Y  \iff \exists n, f(n,i) \subseteq X$. 
  If $i \in Y$, then there exists $n$ s.t. $f(n,i) \subseteq X$ (in particular $f(n,i)$ is defined) which implies by
  definition of $V$ that $i \in X$.
  We conclude $Y \subseteq X$, i.e. ${\cal C}(X) \subseteq X$ and therefore ${\cal C}(X) = X$ and
  $X$ is in the image of ${\cal C}$.    
\end{proof}

\subsection{Craig's theorem}

Craig's theorem is a classical theorem that states that recursively
axiomatisable theories admit recursive presentations.  This theorem is
also valid for subshifts and groups, and can be obtained in the
following way:

\begin{definition}
  A quasivariety $V$ has redundant axioms if for any finite set $S$,
  there exist a set $T$ disjoint from $S$ s.t. ${\cal C}(S) = {\cal S}(T)$
\end{definition}
The definition implies that no finite set of axioms is necessary, as
it can always be replaced by another equivalent set of axioms.
This second set can be taken finite:
\begin{proposition}
  Suppose that ${\cal C}(S) = {\cal S}(T)$ for some finite set $S$. Then
  
there exists a finite subset $T'$  of $T$ s.t. ${\cal C}(S) = {\cal C}(T)$
\end{proposition}
\begin{proof}
  Let $ S = \{ s_1 \dots s_n\}$.
As the closure operator is finitary, every element $s_i$ of $S$ is in the
closure of a finite subset $T_i$  of $T$.

We now take $T' = \cup T_i$.

\end{proof}

\begin{example}
In the following examples, we will give, for any axiom $p \in I$, a
infinite collection of disjoint finite sets, each of them being equivalent
ot $p$. This obviously proves the result. 
  \begin{itemize}
    \item The quasivariety of theories has redundant axioms. Indeed we
      can always replace an axiom $\phi$ by any axiom of the form $\overbrace{\phi \wedge
        \phi \wedge \dots \phi}^n $
      
    \item The quasivariety of groups has redundant axioms. Indeed we
      can always replace an axiom $u$ by any axiom of the form
      $vuv^{-1}$ for some $v$ 
    \item The quasivariety of subshifts has redundant axioms. Indeed we
      can always replace an axiom $u$ by any set of axioms $\{ uv, v
      \in \Sigma^n\}$ for some $n$.
      
    \item The quasivariety of closed subsets has redundant      
      axioms. Indeed we      
      can always replace an axiom $]p,q[$ by the two axioms
      $]p,r[,]s,q[$ for any $r \leq q$ and $s \geq p$      
      
    \end{itemize}   
\end{example}

\begin{proposition}
  Suppose that $V$ is a quasivariety with redundant axioms. There
    exists a total recursive  function $g$ that given an axiom $p \in I$
  and a finite set $S$ outputs a set $T$ s.t. $C(T) = C(p)$ and $T$
  is disjoint from $S$.
  
\end{proposition}
\begin{proof}
  
  Such a $T$ exists by hypothesis.  
  Notice that $EQ = \{ (T,T') |  C(T) = C(T') \}$ is recursively
  
  enumerable.  
  Therefore the recursive function $g$ on input $p$ and $S$ only has to look at all possibles sets $T$ that are
  disjoint from $S$ until it finds one that is equivalent to $p$.  
\end{proof}

\begin{theorem}[Craig's theorem for quasivariety with redundant
  axioms]
  
  Suppose that $V$ is a quasivariety with redundant axioms. If $X$ is
  recursively presented ($X = C(R)$ for some recursively enumerable
  set $R$), then there exists a recursive set $R'$ s.t. $X = C(R')$.
\end{theorem}
\begin{proof}
Without loss of generality, we identify the core set $I$  with the set
of integers $\mathbb{N}$.

Suppose $R\not=\emptyset$ is the image of a total recursive function
$R = \{ f(n), n \in \mathbb{N}\}$.
We inductively build a sequence of finite sets $T_i$  by
$T_{i+1}  =  g(f(i+1), [0,\max T_i])$.
That is, $T_{i+1}$ is a finite set equivalent to $f(i+1)$ that uses
only integers greater than any seen previously.

By construction $\bigcup T_i$ is a presentation of $X$.
Furthermore, $\bigcup T_i$ is not only recursively enumerable, but
recursively enumerable in an increasing order, which makes it recursive.
\end{proof}

The trivial quasivariety (given by no Horn formulas) show this theorem
is not true in general without any assumptions.

\section{Maximal elements}

We will show how some structural properties on some particular
points of a quasivariety translate into computability properties.
We start with maximality.
\begin{definition}
	An element $X$ of a quasivariety $V$ is maximal if $X \subseteq	Y$, 
	with $Y \in V$, implies $Y = X$ or $Y = I$.
\end{definition}
The case $X = I$ is a degenerate case that is usually not interesting
in the applications.

\begin{example}
  In the quasivariety $V$ of theories $X \subseteq Y$ means that $X$
  has less theorems than $Y$, i.e that $Y$ is an extension of $X$.  
  Therefore maximal points in the quasivariety $V$ of  theories are exactly the \emph{complete} theories (plus the inconsistent theory).

  In the quasivariety $V$ of subshifts, if $X$ and $Y$ are
  respectively the words
  that do not appear in   subshifts $S$ and $T$, then $X \subset Y$
  means that $T \subseteq S$. Therefore maximal points in the quasivariety $V$ of  subshifts are called
  \emph{minimal} subshifts   (plus the empty subshift).

  In the quasivariety $V$ of f.g. groups with $n$ elements, if $X$ and $Y$ are respectively the normal
  subgroups that code $F$ and $G$ then $X \subseteq Y$ means that
  $G$ is a quotient of $F$.  
  Therefore maximal points in this quasivariety are exactly the simple
  groups (plus the trivial group).

  The case of the quasivariety $V$ of closed subsets of
  $\mathbb{R}$ is similar to the case of subshifts, and we get that 
  maximal points in this quasivariety are exactly the singletons $\{x\}$ (plus the empty set).    
\end{example}

Our first theorem generalizes the theorem of Kuznetsov, without
any hypothesis on the computability of the structure:

\begin{theorem}
\label{thm:max}	
	Let $V$ be a quasivariety.	If $X \in V$ is maximal then $\overline{X} \leq_e X$.
\end{theorem}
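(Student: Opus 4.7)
The plan is to exploit the maximality of $X$ together with the previous proposition, which gives a fixed computable function $f$ such that $A \leq_e^f Y$ whenever $Y$ is a presentation of $A$ in $V$. Here the point we want to describe as a closure is not $X$ itself but the top element $I$: if $X$ is maximal and $y \notin X$, then $C(X \cup \{y\})$ is an element of $V$ strictly containing $X$, hence equal to $I$. So $X \cup \{y\}$ is a presentation of $I$ exactly when $y \notin X$, and otherwise $C(X \cup \{y\}) = X$.

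Fix once and for all some witness $z^\star \in I \setminus X$; this exists because the case $X = I$ is trivial ($\overline{X}=\emptyset$ is enumeration-reducible to anything). By the previous proposition applied to the top element $I$, there is a computable $f : I \times \NN \to P_f(I)$ with the property that, for every $R \subseteq I$,
\[ z^\star \in C(R) \iff \exists n,\ f(z^\star,n) \subseteq R. \]
Apply this with $R = X \cup \{y\}$: by the dichotomy above, $z^\star \in C(X \cup \{y\})$ holds iff $y \notin X$ (if $y \in X$ then $C(X \cup \{y\}) = X$ does not contain $z^\star$ by choice of $z^\star$; if $y \notin X$ then $C(X \cup \{y\}) = I \ni z^\star$). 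Rearranging,
\[ y \in \overline{X} \iff \exists n,\ f(z^\star,n) \setminus \{y\} \subseteq X. \]

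Setting $g(y,n) = f(z^\star,n) \setminus \{y\}$ gives a computable function witnessing $\overline{X} \leq_e^g X$, which finishes the proof. The step I expect to need the most care is the $y \in X$ direction of the equivalence: without the auxiliary choice of a fixed element $z^\star$ that is \emph{known} to lie outside $X$, the condition $\exists n,\ f(z^\star,n) \subseteq X \cup \{y\}$ could be satisfied simply because $z^\star \in X$, giving spurious elements of the enumeration. It is essential here that the reduction is not required to be uniform in $X$, so we are free to hard-code such a $z^\star$ into the procedure; the assumption that $X$ is maximal (and distinct from $I$) is precisely what makes such a witness available.
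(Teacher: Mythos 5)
Your proof is correct and follows essentially the same route as the paper's: fix a witness $a=z^\star\notin X$, note that by maximality $x\in\overline{X}$ iff $a$ lies in the closure of $X\cup\{x\}$, and use the uniform function $f$ from the presentation proposition, removing $\{x\}$ from the finite sets to get the enumeration operator. Your extra care about the $y\in X$ direction and the non-uniformity in the choice of $z^\star$ matches the paper's (implicit) reasoning exactly.
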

Recall that $\overline{X}$ is the complement (in $I$) of $X$.
A set $X$ s.t. $\overline{X} \leq_e X$ is sometimes called a \emph{total} set.

Before going to the (easy) proof, let us explain the significance of
the theorem, and why it generalizes previous theorems.
$\overline{X} \leq_e X$ means that, from any enumeration of $X$, one
can compute some enumeration of $\overline{X}$: Given positive
information on $X$ (which elements are in $X$) we can get  negative
information on $X$ (which elements are not in $X$).

This does not happen quite often. For an example,
suppose that $X$ is recursively enumerable: there is an algorithm that
produces an enumeration of the elements of $X$. If $\overline{X} \leq_e X$, we get that $\overline{X}$ is also recursively enumerable,
and therefore $X$ is recursive.
So the only recursively enumerable sets $X$ for which
$\overline{X}\leq_e X$ are the recursive sets.
In fact, using a well known theorem of Selman\cite{Selman},
$\overline{X}\leq_e X$ can be reformulated as ``For every oracle $A$,
if $X$ is recursively enumerable in $A$, then  $X$ is recursive in $A$''.

Notice that existing theorems \cite{Kuznetsov,Hochman2, BooneHigman}, when rephrased in our vocabulary
are usually of the form ``Assume that $X$ is finitely presented. Then
if $X$ is maximal, it is computable''. Our theorem is more general, as
we do not have any assumption about the presentation of $X$ (in
particular $X$ might not be recursively enumerable).

We will now give a brief idea of the proof using the language of
logic.
Suppose that $X$ is a maximal set of consistent formulas.
By definition, $\phi \not\in X$ if adding $\phi$ to $X$ is
inconsistent.  This is equivalent to saying that there exists a finite
$Y \subseteq X$ s.t. $\{ \phi\} \cup Y$ is inconsistent.
To know whether $\phi \not\in X$, it is therefore sufficient to list
all possible $Y$ s.t. $\{ \phi\} \cup Y$ is inconsistent, and see if
one of them is included in $X$.
The proof mimics closely this idea.

\begin{proof}
	If $X = I$ the result is obvious by taking $f(n,x)$ to be nowhere defined.
	Otherwise, let $a \not\in X$.
	
	Then $x \in\overline{X}$ iff the smallest point containing both
	$X$ and $x$ contains $a$.
	
	Recall there is a function $f$ so that  $A \leq_e^f B$ iff $B$ is a presentation of $A$.

    Thus $x \in \overline{X}$ iff there exists $n$ so that $f(n,a) \subseteq X \cup \{x\}$.

Let $g$ be the partial recursive function defined by  ${g(n,x) = f(n,a) \setminus \{x\}}$ whenever $f(n,a)$ is defined.
	
	Then $x \in \overline{X}$ iff there exists $n$ so that $g(n,x) \subseteq X$.
\end{proof}	

\begin{corollary}
  Let $V$ be a quasivariety. If $X$ is finitely (or recursively)
presented and maximal, then $X$ is recursive.
\end{corollary}

\begin{example}
Let $S$ be a complete theory. Then the set of formulas that are invalid
in $S$ is enumeration-reducible to the set of formulas
that are valid in $S$.
In particular, if $S$ is finitely axiomatisable (or recursively
axiomatisable), then the set of formulas that are valid in $S$ is
recursive \cite{Janiczak}.

Let $S$ be a minimal subshift. Then the set of words that appear in
$S$ is enumeration-reducible to the set of words that do not appear in
$S$. In particular if $S$ is a minimal
subshift of finite type (or a minimal effectively closed subshift),
the set of forbidden words of  $S$ is recursive.
This theorem was first proven in \cite{Hochman2,ballier2008}.

Let $G$ be a f.g. simple groups . Then the complement of the word problem of $G$ is
 is enumeration-reducible to the word problem of $G$. In particular if
 $G$ is a finitely presented simple group, the  word problem of $G$ is
 recursive.
 This theorem was first proven in \cite{BooneHigman}.

 Let $x \in \mathbb{R}$. Then the set of open rational intervals $]p,q[$ that
  contains $x$ can be enumerated from the set of open rational intervals that
  do not contain $x$. In particular, if $x$ is recursively
  presented, i.e. if $\{x\}$ is a $\Pi_1^0$ class, then $x$ is
  recursive. See for example \cite{CClo}.
\end{example}

The astute reader may realise that in the case of first order logic there is
actually an easier proof of the theorem: Indeed $\phi \not\in X$ iff $\neg \phi \in X$.
This means we have a stronger reduction: $\overline{X}$ is many-one reducible to
$X$, and in this particular case, the reverse is also true: $X$ is many-one
reducible to $\overline{X}$.  This does not hold in general. To see why, let's look at a variant of
the quasivariety $V_{FO}$. If we look only at $\forall\exists$ formulas, we get the following:
If $T$ is a complete theory which is
$\forall \exists $-axiomatizable,
then we can enumerate the $\forall \exists$ formulas that are false from any enumeration of the $\forall \exists$ formulas
that are true. This cannot be proven by using the $\neg$ operator as a magic
wand, as the negation of a $\forall \exists$ formula is not a $\forall \exists$
formula.

In fact our theorem is best possible: If $X$ is a
maximal point in a quasivariety, the fact that $\overline{X}\leq_e X$
is the strongest statement we can prove in full generality on $X$.
Indeed there is a converse: If a set $X$ satisfies $\overline{X} \leq_e X$, then it is a maximal point in a
suitable quasivariety:
\begin{theorem}
  \label{thm:conv}
  Let $A \subseteq I$ with $I$ recursive.
  
	Let $\overline{A} \leq_e A$. Then there exists a quasivariety $V$
	s.t. $A \in V$ and $A$ is maximal.
\end{theorem}
Note that recent (unpublished) results of Ethan McCarthy show that we can take $V$ to be the
quasivariety $V_{sym}$ of subshifts, in the sense that any set $\overline{A}$
s.t. $\overline{A} \leq_e A$ is enumeration-equivalent to (the
language of) a minimal subshift.

\begin{proof}
	Let $\overline{A} \leq_e^f A$.	
	Thus $x \in \overline{A}$ iff there exists $n$ s.t. $f(n,x) \subseteq A$.
	
	Let $V$ be the quasivariety defined by all axioms
        \[
        x \in X \wedge \bigwedge_{i \in f(n,x)}
		  i \in X \rightarrow j \in X.
                  \]
                  for all $n \in \mathbb{N}, x,j \in I$ whenever  $f(n,x)$ is defined.
	
	$A$ is in this quasivariety: Indeed, there is no $x \in A$ and $n$ so
	that $f(n,x) \subseteq A$, thus all premises are false.
	
	It is clearly maximal: Let $A \subseteq A'$ and $a \in A' \setminus A$.
	Then $a \in \overline{A}$ thus there exists $n$ s.t. $f(n,x)
	\subseteq A \subseteq A'$ thus for all $y$, $y \in A'$, thus $A' = I$.	
\end{proof}

Notice that the proof of Theorem~\ref{thm:max} was nonuniform: We need to exhibit, for $X$ maximal and $X \not= I$,
some element $a$ that is not in $X$, and this $a$
might depend on $X$.
In many examples, $a$ can be chosen independently of $X$.
\begin{theorem}[Uniform version]
Let $V$ be a quasivariety so that $I$ (the whole set) is finitely presented.
Then there exists a partial recursive function $g$ so that if $X$ is maximal, $X \not= I$, then $\overline{X} \leq_e^g X$.
\end{theorem}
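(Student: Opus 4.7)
The plan is to observe that the only non-uniform ingredient in the proof of the previous theorem was the choice of a witness $a \notin X$, and that a finite presentation of $I$ lets us replace this single existential witness by a uniform finite conjunction. Since $I$ is finitely presented, fix once and for all a finite set $F = \{a_1,\dots,a_k\}$ such that the smallest point of $V$ containing $F$ is $I$. Let $f$ be the computable function from the previous proposition, so that for any $X \in V$ and any $z \in I$, one has $z \in X$ iff there exists $n$ with $f(n,z) \subseteq X$.

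The key observation is that, because $X$ is maximal with $X \neq I$, a point $x$ lies outside $X$ iff the smallest element of $V$ containing $X \cup \{x\}$ equals $I$, which in turn happens iff that point contains the whole generating set $F$. So
\[
 x \notin X \iff \forall i \le k,\ \exists n_i,\ f(n_i,a_i) \subseteq X \cup \{x\}
             \iff \exists (n_1,\dots,n_k),\ \bigcup_{i=1}^k f(n_i,a_i) \subseteq X \cup \{x\}.
\]
This suggests defining, on a standard computable encoding $m = \langle n_1,\dots,n_k\rangle$,
\[
 g(m, x) \;=\; \Bigl(\bigcup_{i=1}^k f(n_i, a_i)\Bigr) \setminus \{x\},
\]
which is computable uniformly in $V$ (through $F$ and $f$), with no reference to $X$.

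It then remains to verify both directions of $x \in \overline{X} \iff \exists m,\ g(m,x) \subseteq X$. The forward direction is exactly the displayed equivalence above, modulo removing $x$ from the union. For soundness, suppose $g(m,x) \subseteq X$ for some $m = \langle n_1,\dots,n_k\rangle$, so that $\bigcup_i f(n_i,a_i) \subseteq X \cup \{x\}$; then each $a_i$ lies in the smallest point of $V$ containing $X \cup \{x\}$, hence so does all of $F$, hence this point equals $I$. Since $X \neq I$ by hypothesis, we cannot have $X \cup \{x\} = X$, so $x \notin X$.

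The only real obstacle is pinpointing which step in the non-uniform proof depended on $X$: it was the choice of $a \notin X$, and the finite presentation of $I$ exactly turns this data-dependent choice into a finite, uniformly given search over the generators of $I$. Maximality of $X$ is used in the same place as in the previous theorem, to conclude that the smallest point containing $X \cup \{x\}$ is forced to be $I$ once it strictly contains $X$.
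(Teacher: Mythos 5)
Your proposal is correct and follows essentially the same route as the paper: fix a finite set $\{a_1,\dots,a_k\}$ presenting $I$, use the uniform enumeration operator $f$ to test whether each $a_i$ lies in the point presented by $X \cup \{x\}$, and fold the $k$ searches into one operator $g(\langle n_1,\dots,n_k\rangle,x)=\bigl(\bigcup_{i\le k} f(n_i,a_i)\bigr)\setminus\{x\}$. Your added soundness check (that $g(m,x)\subseteq X$ forces $x\notin X$ because $X\neq I$) is a correct and welcome elaboration of what the paper leaves implicit.
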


\begin{proof}
  Let $f$ be the partial recursive function so that
  $A \leq_e^f B$ iff $B$ is a presentation of $A$.
  
  By definition, there exists a finite set $E$ s.t. any point containing all of $E$ is
equal to $I$. Say $E = \{ e_1 \dots e_k\}$.

Thus $x \in\overline{X}$ iff for all $e \in E$, there exists $n$ so that $f(n,s) \subseteq X \cup\{x\}$.

Let $g(n_1, n_2 \dots n_k,x) = \cup_{i \leq k} f(n_i,e_i) \setminus \{x\}$.

Then $x \in \overline{X}$ iff there exists $n_1 \dots n_k$ so that $g(n_1 , n_2 \dots n_k, x) \subseteq X$.

\end{proof}	

\begin{example}
Let $V$ be the quasivariety of first order theories. $\{ \exists
x , x \not= x\}$ is a finite presentation of the inconsistent theory,
i.e. I itself. 

Let $V$ be the quasivariety of subshifts. $\{\epsilon\}$ is a finite
presentation of the empty subshift, i.e. I itself: If 
we forbid the empty word $\epsilon$ to appear, we have essentially forbidden
all words to appear.

Let $V$ be the quasivariety of groups with $n$ generators $g_1, g_2,
\dots g_n$. Then $\{g_1, g_2, \dots g_n\}$ is a finite
presentation of the trivial group, i.e. I itself.

As we saw earlier, the closed subset $\emptyset$ is not finitely
presented in the quasivariety $V$ of closed subsets of $\mathbb{R}$,
as any finitely presented closed set contains neighborhoods of $\pm
\infty$.

This means that, in the first three cases, the reduction is
uniform. In particular there exists an algorithm that, given a minimal
subshift of finite type $S$ (resp. a finitely axiomatisable complete
theory $S$, a f.g. simple group) computes the set of forbidden words
of $S$ (resp. the set of valid formulas in $S$, the word problem of
$G$).

The reduction is not uniform in the last case: To enumerate the set of
open rational intervals $]p,q[$ that contain $x$ from the set of open
rational intervals that do not contain $x$, we need an additional
information about $x$, e.g. in the form of a finite interval that contains
it.  This phenomenon is well known in computable analysis.

\end{example}

We present here a slight generalization of the main theorem. Instead of
requiring that $X$ is maximal, we require that we have  an exact description of
all points above $X$:
\begin{theorem}
	\label{thm:below}
	Let $(S_n)_{n \in \mathbb{N}}$ be a recursively enumerable collection of
	finite subsets of $I$, and $(Y_n)_{n \in \mathbb{N}}$ be the points
	presented by $S_n$.
	Let ${\cal Y} = \{ Y_n, n \in \mathbb{N}\}$.
	
	We say that $X$ is maximal below $\cal Y$ if 
 $X \not\in \cal Y$, but every point larger than $X$ is in $\cal Y$.

	Then $\overline{X} \leq_e X$, uniformly.	
\end{theorem}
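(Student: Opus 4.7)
The plan is to generalise the uniform maximal theorem, where the role previously played by the single finitely presented top element $I$ (with finite presentation $A$) is now taken by the whole computable family $(Y_n)_n$ with finite presentations $(S_n)_n$. The starting observation, mirroring the earlier argument, is that $x \notin X$ iff $C(X \cup \{x\}) \in {\cal Y}$: if $x \notin X$ then $C(X \cup \{x\}) \supsetneq X$, so by maximality of $X$ below ${\cal Y}$ this closure lies in ${\cal Y}$; conversely, since $X \notin {\cal Y}$, whenever $C(X \cup \{x\}) \in {\cal Y}$ we have $C(X \cup \{x\}) \neq X$, i.e.\ $x \notin X$. So $x \notin X$ iff there is an $n$ with $C(X \cup \{x\}) = Y_n$.

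To turn this into an enumeration reduction I would reuse the computable function $f$ of the Proposition on presentations, for which $a \in C(R)$ iff some $f(a,m) \subseteq R$. For a candidate index $n$, the conjunction ``$S_n \subseteq C(X \cup \{x\})$ and $x \in Y_n$'' is witnessed by a tuple $(m_s)_{s \in S_n}$ with $f(s, m_s) \subseteq X \cup \{x\}$ for each $s \in S_n$ together with a single $w$ with $f(x, w) \subseteq S_n$. Since $S_n$ is finite and uniformly computable in $n$, the check $f(x, w) \subseteq S_n$ is decidable; I would define $g(x, (n, (m_s)_s, w))$ to output $\bigcup_{s \in S_n} f(s, m_s) \setminus \{x\}$ when that check passes, and $\{\bot\}$ otherwise. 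The existence of a witness with $g(x, \cdot) \subseteq X$ is then equivalent to there being some $n$ with $S_n \subseteq C(X \cup \{x\})$ and $x \in Y_n$, uniformly in the data $(S_n)_n$.

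The forward direction (if $x \notin X$ then some witness passes) is clean: take $n$ with $Y_n = C(X \cup \{x\})$ and extract $(m_s)_s$ and $w$ from the $f$-derivations of $S_n$ and of $x$ respectively. The delicate part, and where I expect the main obstacle, is the backward direction. A passing witness only immediately yields $Y_n \subseteq C(X \cup \{x\})$ and $x \in Y_n$; if some $Y_n$ lies inside $X$ (which the bare definition of maximal below ${\cal Y}$ does not a priori forbid — for instance the minimal point of $V$ may already be some $Y_{n_0} \subseteq X$), then $x \in Y_n \subseteq X$ could yield a spurious positive. Resolving this gap is the step I would work through carefully, probably by refining the witness to encode the genuine equality $C(X \cup \{x\}) = Y_n$ rather than the one-sided inclusion — for instance by enforcing that the derivation of $S_n$ actually consumes $x$, thereby ruling out the case $Y_n \subseteq X$.
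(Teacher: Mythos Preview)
Your approach is precisely the paper's: test whether some $Y_p$ sits inside $C(X\cup\{x\})$ by checking that $\bigcup_i f(n_i,a^p_i)\setminus\{x\}\subseteq X$. The paper's proof is in fact terser than yours---it does not add your clause $x\in Y_n$, and it simply asserts the ``iff'' without examining the backward direction.

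The gap you flag in the backward direction is genuine, and it is present in the paper's argument as well. The displayed equivalence in the paper only encodes $Y_p\subseteq C(X\cup\{x\})$; if some $Y_p$ happens to lie inside $X$ (nothing in the stated hypothesis forbids this), the test fires for every $x$. Concretely: take $V$ given by the scheme $0\in X\Rightarrow n\in X$ (all $n$), let $X=\mathbb N\setminus\{0\}$, $S_0=\{0\}$, $S_1=\{1\}$. Then $X$ is maximal below ${\cal Y}=\{Y_0,Y_1\}=\{\mathbb N,\{1\}\}$, but for $x=1\in X$ one has $S_1\subseteq X=C(X\cup\{1\})$ and $1\in Y_1$, so both the paper's test and yours wrongly place $1$ in $\overline X$.

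Neither of your proposed repairs closes the gap. Encoding the full equality $C(X\cup\{x\})=Y_n$ requires $X\subseteq Y_n$, a $\Pi_1$ condition over $X$ that cannot be packaged into an enumeration clause. And ``the derivation consumes $x$'' only says some witness $f(s,m_s)$ \emph{mentions} $x$, not that $x$ is \emph{necessary}: in the example above the derivation $\{1\}$ of $1\in C(S_1)$ consumes $x=1$ yet $1\in X$. What does fix things cleanly is an additional hypothesis, satisfied in every intended application (finite quotients of groups, finite-index congruences): assume ${\cal Y}$ is upward closed in $V$, i.e.\ any point of $V$ containing some $Y_n$ is itself in ${\cal Y}$. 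Then $Y_p\subseteq C(X\cup\{x\})$ forces $C(X\cup\{x\})\in{\cal Y}$, hence $C(X\cup\{x\})\neq X$, and the backward direction goes through with the paper's reduction exactly as written.
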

Notice that it is not required that all elements of ${\cal Y}$ are larger than
$X$.
In the examples one is usually interested in a set ${\cal Y}$ of ``well-known''
points, and we look at all points that are above this set of well-known points.

In the theory of algebra, one would take for ${\cal Y}$ all finite algebras.
Then a point $X$ below $\cal Y$ is a point $X$ that is not a finite algebra but for which
every quotient of $X$ is finite.
A specific version of this theorem in this context was proven by
Maltsev \cite{Maltsev}: Every finitely generated and finitely presented algebra on which all congruences have finite index is recursive.

\begin{example}
Let's call a subshift just-infinite if it is infinite but all its proper
subshifts are finite.
For example, the set of infinite words over the alphabet $\{0,1\}$ with at most one
 occurence of the symbol $1$ is a just-infinite subshift.
It is easy to see that a just-infinite subshift satisfy the property above.
In particular, for a just-infinite subshift of finite type (or any
effectively closed just-infinite subshift), the set of forbidden
patterns is recursive.

A group $G$ with generators
$a_1,\dots a_k$ is finite iff there exists a size $p$ s.t. all words
of length $p$ over the alphabet $\{a_1^{\pm 1} \dots a_k^{\pm 1}\}$
are equal to a word of smaller length.
As a consequence, the set of all finite presentations of finite groups is
recursively enumerable.
A group for which all proper quotient are finite is called a
\emph{just-infinite} group. Said otherwise, a just-infinite group is
maximal below the finite groups.
We get from the previous theorem that if $G$ is a just-infinite recursively presented group, then the word
problem for $G$ is recursive.

\end{example}	
\begin{proof}   
	Straightforward generalization of the previous theorem.
	
	Let $(S_n)_n$ be the recursive collection of finite sets, and
	write $S_n = \{ a^n_1, \dots, a^n_{h(n)}\}$, where $h(n)$ is the
	(computable) size of $S_n$.

	Recall there is a function $f$ so that  $A \leq_e^f B$ iff $B$ is a presentation of $A$.
	
	Then $x \in \overline{X}$ iff the smallest point (or any point)
	containing  $\overline{X} \cup \{x\}$ is one of the sets of $\cal Y$.
		
	Thus $x \in \overline{X}$ iff $\exists p, \exists n_1 \dots
	n_{h(p)}, \cup_{i \leq h(p)} f(n_i, a^p_i) \setminus \{x\} 	\subseteq X$.
\end{proof}

\section{Discriminable points}
Discrimination is a generalization of maximal elements.
The concept and the vocabulary comes from group theory, in particular
\cite{Cornulier}.
The notion is already present in Kuznetsov\cite{Kuznetsov}, where the
author defines a concept of a completely finitely presented algebra,
which corresponds in our vocabulary to a point which is both finitely
presented and finitely discriminated.

\begin{definition}
Let $V$ be a quasivariety.

	A set $Y$ is a discriminator for a point $X \in V$ if $Y \cap X =	\emptyset$
	and for every point $X' \in V$ s.t. $X \subsetneq X'$, we have $X' \cap Y \not= \emptyset$.
			
	$X$ is \emph{finitely discriminable} if it admits a finite discriminator.
	$X$ is \emph{recursively discriminable} if it admits a recursively enumerable discriminator.
\end{definition}
So a discriminator $Y$ is a set of objects that are not in $X$ but s.t. every
extension of $X$ contains at least some element of $Y$.

Notice that by definition $\overline{X}$ is always a discriminator for $X$.
If $X$ is maximal, any \emph{single} element in $\overline{X}$ acts as a
discriminator for $X$. Said otherwise, every (nontrivial) maximal element is
finitely discriminable.

When $V$ is seen as a topological space, points $X$ that are both finitely discriminable and finitely presented 
are isolated: If $R$ is the presentation and $Y$ the discriminator, then $X$ is
the only point of $V$ that contains $R$ and does not intersect $Y$. Isolated points have been particularly studied in the theory of
groups \cite{Cornulier}.

Easy examples come from the following proposition:
\begin{definition}
  Let $V$ be a quasivariety.
  A point $X \in V$ is \emph{quasi-maximal} if there are only finitely many points in $V$
  above $X$. 
\end{definition}
\begin{proposition}
  Quasi-maximal points are finitely discriminable.
\end{proposition}
\begin{proof}
  Choose for every point $X' \supsetneq X$ some $a_{X'} \in X' \setminus X$ and
  take $Y = \{ a_{X'}, X' \supsetneq X\}$.
\end{proof}  
  
There are examples of finitely discriminable points that are not
quasi-maximal. See  \cite{PISFT} for an example in the quasivariety
of two-dimensional subshifts, or  \cite[Theorem 5.3]{Cornulier} for an example
in the quasivariety of groups.

On the other hand:
\begin{proposition}
In the quasivariety of theories, finitely discriminable points are exactly the
quasi-maximal points.
\end{proposition}
\begin{proof}
  Let $T$ be a finitely discriminable theory. This means there exists $Y = \{
  \phi_1 \dots  \phi_n \}$
  s.t. no $\phi_i$ is in $T$ and every extension of $T$ contains
  some  $\phi_i$.

  First, note that if $T \models \phi_i \rightarrow \phi_j$ for some $i \not=j$ then every extension
  that contains $\phi_i$ also contains $\phi_j$  so that $Y \setminus \{
  \phi_i\}$ is also a discriminator for $T$.
  We can therefore suppose wlog that if $\phi_i$ and $\phi_j$ are in $Y$, then
  $\phi_i \rightarrow \phi_j$ is not in $T$ unless $i = j$.

  We want to prove that $T$ has only finitely many extensions.
  It is sufficent to prove that  $T$ has only finitely many
  \emph{complete} extensions.
  Indeed, a theory is  entirely characterized by its set of complete
  extensions: If $T_1$ and $T_2$ are two different consistent theories there
  exists $\varphi$ s.t. $\varphi \in T_1$ and $\varphi\not\in T_2$
  (or conversely). As $T_2$ does not prove $\varphi$, the theory $T_2
  \cup \{\neg\varphi\}$ is not inconsistent and therefore has a
  complete extension. This complete extension cannot be an extension of $T_1$.

  First we examine the formula $\vartheta = \neg \phi_1 \vee \neg \phi_2 \vee \dots \neg \phi_n$.
  Suppose that $\vartheta \not\in T$. Then by discriminability there  exists $i$ s.t.
  $T \cup \{\vartheta\} \models \phi_i$.
  But then $T \models \neg \phi_i \rightarrow \phi_i$ and therefore $T \models
  \phi_i$, which is impossible by discriminability.
  Therefore $\vartheta \in T$, that is $T \models \neg \phi_1 \vee
  \neg \phi_2 \vee \dots \vee
  \neg \phi_n$.

  Next, fix some formula  $\phi_i \in Y$, and let $\psi$ be any formula.
  We look at $\phi_i \vee \psi$. There are two cases:
  \begin{itemize}
    \item $\phi_i \vee \psi \in T$. Therefore $T \cup \{ \neg \phi_i\} \models
      \psi$.
    \item Otherwise by discriminability there exists $j$ s.t. $T \cup \{ \phi_i
      \vee \psi\} \models \phi_j$.
      This means that $T \cup \{ \phi_i\} \models \phi_j$ and by our
      supposition this implies $\phi_i = \phi_j$.
      Therefore $T \cup \{ \phi_i \vee \psi \} \models \phi_i$. In particular $T \models \psi \rightarrow \phi_i$ and therefore $T \cup \{ \neg
      \phi_i\} \models \neg \psi$            
  \end{itemize}
  We have therefore proven that for any formula $\psi$, either $T \cup  \{\neg\phi_i\} \models \psi$ or $T \cup \{\neg\phi_i\} \models \neg\psi$.
  Therefore $T \cup \{\neg\phi_i\}$ is an axiomatisation of a complete theory (or is inconsistent).
  
  This proves the result: if $T'$ is a complete extension of $T$, then by the
  first point, some $\neg\phi_i$ should be true in $T'$. As $T \cup
  \{\neg\phi_i\}$ is an axiomatisation of a complete theory, this means that
  $T'$ is actually the closure of $T \cup \{\neg\phi_i\}$. Therefore there are at
  most $n$ complete extensions of $T$.    
\end{proof}  

We now go to the generalization of the theorem to recursively discriminable points:
\begin{theorem}
  \label{thm:disc}
  Let $V$ be a quasivariety and $X$ a point in $V$.
	If $X$ is recursively discriminable, then $\overline{X} \leq_e X$.
\end{theorem}
\begin{corollary}
  If $X$ is quasi-maximal and recursively presented, then $X$ is recursive.
\end{corollary}
This corollary was first obtained for subshifts in \cite{Salo4}.
\begin{proof}
Let $Y$ be the discriminator.
Now $x \in \overline{X}$ iff the point presented by $X \cup\{x\}$
contains some element of $Y$.

Recall again that there exists a partial recursive function $f$ s.t. $A
\leq_e^f B$ iff $B$ is a presentation of $A$.
Thus $x \in \overline{X} \iff \exists y \in Y, \exists n \in N, f(n,y) \subseteq X \cup \{x\}$.

Let $Y = (y_m)_{m \in \mathbb{N}}$ a recursive enumeration of $Y$.

We now define $g(n,m,x) = f(n,y_m) \setminus \{x\}$.
Then $x \in \overline{X} \iff \exists n,m, g(n,m,x) \subseteq X$.
\end{proof}

\begin{corollary}
	Let $V$ be a quasivariety and $X$ a point of $V$.
	
	$X$ is recursive iff it is recursively presented and recursively
	discriminable.
\end{corollary}	
This corollary was first obtained for groups in \cite{Cornulier}.

\begin{proof}
	If $X$ is recursive, then $X$ is a presentation of $X$ which is
	recursive, and $\overline{X}$ is a discriminator for $X$ which is
	recursive.
	
	Conversely, if $X$ is recursively presented by $Y$, then $X$ is
	recursively enumerable, as $X \leq_e Y$.
	As $X$ is recursively discriminated, $\overline{X} \leq_e X$, thus
	$\overline{X}$ is recursively enumerable, and $X$ is recursive.
\end{proof}

\section{Difference between $X$ and $\overline{X}$}

All results in the previous sections show that all consequences of maximality or
quasi-maximality are of the same form: from any enumeration of $X$ we can
compute some enumeration of $\overline{X}$.
As the converse is usually not true, this means that in all these examples we
get strictly more information from $X$ than from $\overline{X}$.

In this last section, we will try to explain what is this information.
For this we need to ask a bit more from the reduction:
\begin{proposition}
Let $X \in V, X \not=I$ s.t. $\overline{X} \leq_e X$. Then there exists a total recursive function $f$  s.t. $\overline{X} \leq_e^f X$.  
\end{proposition}
(That is, $f$ can be chosen total).
\begin{proof}
  $\overline{X}\leq_e^g X$ for some partial recursive function $g$.
  As $g$ is partial recursive, there exists a total recursive function $h: \mathbb{N} \times \mathbb{N} \times I \rightarrow \{0,1\}$ s.t $g(n,x)$ is defined iff there exists $m$ s.t. $h(n,m,x) = 1$.
  
  Let $a \not\in X$. Define $f(n,m,x)$ by $f(n,m,x) = \{a\}$ if $h(n,m,x) = 0$ and
  $f(n,m,x) = g(n,x)$ otherwise.
  $f$ is total  and
  $  x \in \overline{X} \iff \exists n,m\  f(n,m,x) \subseteq X$.
\end{proof}
Notice that if $I$ is finitely presented by a set $A$, we can replace
$\{a\}$ by $A$ and obtain a uniform version of the theorem, i.e. $f$
does not depend on $X$.

\begin{definition}
	Let $\overline{X} \leq_e^f X$ with $f$ total.
	Let $g(x) = \min \{n \in \mathbb{N} | f(n,x) \subseteq X\}$.
	
	$g$ is a partial map, as it is defined only on $\overline{X}$.
	We identify $g$ with $G = \{ (x,g(x)) | g(x) 	  \text{is defined}\}$
\end{definition}	

As $f(n,x)$ is the list of all possible witness that $x \in \overline{X}$,
$g(x)$ therefore represents the very first real witness  in the enumeration.
Note that $g$ depends of course on the enumeration $f$.
We will now see that a bound on $g$  is exactly the information we need to recover $X$ from $\overline{X}$.

\begin{proposition}
	Let $\overline{X} \leq_e^f X$ with $f$ total.
Then $G \leq_e X$.
\end{proposition}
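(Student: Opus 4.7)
The plan is to unwind the definition of $g$ into a pair of conditions that are each already enumeration-reducible to $X$. Explicitly, $(x,n) \in G$ if and only if
\[
f(n,x) \subseteq X \quad\text{and}\quad \text{for every } m < n,\ f(m,x) \not\subseteq X.
\]
The first conjunct is, by the very definition of $\overline{X}\le_e^f X$, in a form ready for an enumeration reduction. The second conjunct is the crucial part: it is a finite conjunction (only $n$ of them), and each clause $f(m,x) \not\subseteq X$ says that some element of the finite set $f(m,x)$ lies in $\overline{X}$ (if $f(m,x)=\{\bot\}$ the clause is automatic and no witness is needed).

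Next I would feed this back through the given reduction: to certify $y \in \overline{X}$ it suffices to exhibit a single index $k$ with $f(k,y) \subseteq X$. So for each $m < n$ with $f(m,x) \neq \{\bot\}$, a witness for $f(m,x) \not\subseteq X$ consists of a choice of some $y_m \in f(m,x)$ together with a number $k_m$ such that $f(k_m,y_m)\subseteq X$. Note that both choices are finite/computable: the $y_m$ ranges over the already-computed finite set $f(m,x)$, and the $k_m$ ranges over~$\mathbb{N}$.

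From this, the enumeration reduction $h$ from $G$ to $X$ writes itself. Fix any computable bijection between $I\times\mathbb{N}$ and $I$ so that $G$ may be regarded as a subset of $I$. On input $(x,n)$, I would computably enumerate all tuples
\[
\bigl((y_m,k_m)\bigr)_{m<n,\ f(m,x)\neq\{\bot\}}
\]
with $y_m\in f(m,x)$ and $k_m\in\mathbb{N}$, and, provided $f(n,x)$ and all the $f(k_m,y_m)$ avoid the symbol $\bot$, output the finite set
\[
f(n,x)\;\cup\;\bigcup_{m<n,\ f(m,x)\neq\{\bot\}} f(k_m,y_m);
\]
in all other cases output $\{\bot\}$. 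By construction, some output of $h$ at $(x,n)$ is contained in $X$ iff $f(n,x)\subseteq X$ and each clause $f(m,x)\not\subseteq X$ for $m<n$ is witnessed, i.e.\ iff $g(x)=n$.

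There is no real obstacle here beyond bookkeeping; the only thing to watch is the handling of the $\{\bot\}$ convention, ensuring that a clause whose $f(m,x)$ equals $\{\bot\}$ is simply skipped (not witnessed by a spurious $(y_m,k_m)$), and that a tuple which forces $\bot$ into the output is discarded. Everything else is a direct packaging of finitely many enumeration reductions into one.
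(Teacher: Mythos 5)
Your proof is correct and takes essentially the same approach as the paper: in both, the minimality of $n$ is certified positively by applying the given reduction $f$ a second time to exhibit, for each $m<n$, an element of $f(m,x)$ that lies in $\overline{X}$ together with its witness. Your write-up simply packages the paper's informal enumeration argument into an explicit enumeration operator, including the bookkeeping for the $\{\bot\}$ convention.
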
	
\begin{proof}
	
Suppose we are given an enumeration of $X$, and at some point we conclude
that $x \in \overline{X}$ because $f(n,x) \subseteq X$ for some $n$,
and  all elements of $f(n,x)$ are currently known to be in $X$.
Then we know that $g(x) \leq n$.
We then look at all sets $f(i,x)$ for $i < n$.
At some point in our enumeration, we will know the status of all points
in $\cup_{i < n} f(i,x)$, either because they were enumerated in $X$,
or we were able to prove that they are in $\overline{X}$.
Thus we will be able to determine the exact value of $g(x)$.
\end{proof}

\begin{proposition}
	Let $h$ be a total function that dominates $g$:	$h(x) \geq g(x)$ whenever $g(x)$ is defined.
	We identify $h$ with the total set $H = \{ (x,h(x)), x \in I \}$
	
	Then $X \leq_e H \oplus \overline{X}$
\end{proposition}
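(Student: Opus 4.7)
The plan is to certify membership $x \in X$ by combining the value $h(x)$ (read off from $H$) with enumeration witnesses from $\overline{X}$ that rule out every candidate $n \le h(x)$ as a value for $g(x)$. The key observation is the contrapositive of the defining property of $h$: if $g(x)$ is defined (i.e.\ $x \in \overline{X}$) then $g(x) \leq h(x)$, so one of the sets $f(n,x)$ with $n \leq h(x)$ must be contained in $X$. Equivalently, to conclude $x \in X$ it suffices to know $h(x) = k$ and exhibit, for each $n \leq k$, a single element $y_n \in f(n,x)$ that lies in $\overline{X}$, for then $f(n,x) \not\subseteq X$ for all $n \leq h(x)$ and $g(x)$ cannot be defined.

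Concretely, I would define the enumeration operator by enumerating, for every $x$, every $k \in \mathbb{N}$, and every tuple $(y_0, \ldots, y_k)$ with $y_n \in f(n,x)$, the finite set
\[
\{\langle 0, (x,k)\rangle\} \;\cup\; \{\langle 1, y_n\rangle : n \leq k\} \;\subseteq\; H \oplus \overline{X}.
\]
This finite set is contained in $H \oplus \overline{X}$ precisely when $(x,k) \in H$ (so $h(x) = k$) and each $y_n$ lies in $\overline{X}$. The enumerations of $H$ and $\overline{X}$ will eventually reveal the unique pair $(x, h(x))$ and, whenever each $f(n,x)$ genuinely meets $\overline{X}$, witnesses $y_n$ of that fact, so the required certificate appears exactly when one exists.

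The substance of the argument is the equivalence between the existence of such a certificate and $x \in X$. The forward direction is routine: if $x \in X$ then for every $n$ we have $f(n,x) \not\subseteq X$, so witnesses $y_n$ are available for $n = 0, \ldots, h(x)$. The interesting direction, and the only real thing to check, is the converse: if some certificate with $k = h(x)$ exists, then $f(n,x) \not\subseteq X$ for all $n \leq h(x)$; were $x$ in $\overline{X}$, then $g(x)$ would be defined with $g(x) \leq h(x)$, so $f(g(x), x) \subseteq X$ — a contradiction. Hence $x \in X$, establishing $X \leq_e H \oplus \overline{X}$.
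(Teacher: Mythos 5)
Your proposal is correct and is essentially the paper's own argument: the paper's one-line proof is exactly the equivalence $x \in X \iff \forall n \leq h(x),\ f(n,x) \cap \overline{X} \neq \emptyset$, and you have simply spelled out the corresponding enumeration operator (read $h(x)$ from $H$, then collect one witness from $\overline{X}$ in each $f(n,x)$ for $n \leq h(x)$) and verified both directions. The only point glossed over --- the possible value $f(n,x) = \{\bot\}$, which contains no element of $\overline{X}$ yet is never a subset of $X$ --- is equally ignored in the paper and is trivially handled by also accepting the decidable condition $f(n,x)=\{\bot\}$ as a witness for index $n$.
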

\begin{proof}
	$x \in X \iff \forall n \leq h(x), f(n,x) \cap \overline{X} \not=
	\emptyset$.	
\end{proof}	
It is important to note that it is not $g$ itself which is
important, but any upper bound on $g$.
Note also that the reduction in the theorem is stronger than
enumeration reducibility.

The previous propositions may seem uninteresting, as the exact
definition of $g$ depends on the particular operator $f$ that was used
to prove that $\overline{X} \leq_e X$, and there does not seem to be
any canonical way to associate some map $g$ to every set $X$ that
satisfies that $\overline{X} \leq_e X$.

However, in many cases, we can say more.
In particular, in the interesting case of a quasivariety $V$ where $I$ is
finitely presented, $f$ does not depend on $X$ but only on $V$, so that
$g$ can be defined indeed in a canonical way

\begin{example}
A minimal subshift $S$ has a \emph{quasiperiodicity}
function (also called uniform recurrence function): 
There exists a function $g(n)$ s.t. every word $w$ of size $n$ that
appear in $S$ is contained in every word of size $g(n)$ that appear in
$S$.
For minimal subshifts, the previous propositions may thus be interpreted this way:
\begin{itemize}
	\item The set of words that appear can be obtained from an
	  enumeration of the set of words that do not appear
	\item The quasiperiodicity function can be obtained from an
	  enumeration of the set   of words that do not appear
	\item The set of words that do not appear can be obtained from an
	  enumeration of both the quasiperiodicity function and the set of words that appear.
\end{itemize}
In particular, a minimal subshift of finite type has a recursive
quasiperiodicity function. This theorem was first established in \cite{ballier2010}.
This theorem is optimal in the sense that it is easy to find minimal subshifts
which have a recursive quasiperiodicity function but which are not
recursive, and minimal subshifts for which the set of words that
appear is recursively enumerable but not recursive.
\end{example}

We now introduce a notion which seems new in group theory for simple groups:

\begin{definition}
Let $S$ be a finitely generated, simple group, with generators $a_1 \dots a_k$.

For a word $w$ over the generators $a_1 \dots a_k$, let $g(w)$
be the smallest $p$ s.t. all generators $a_i$ can be written as products
of less than $p$ elements of the form  $hwh^{-1}$ or $hw^{-1}h^{-1}$ for $h
\in S$, and each  $h$ is a product of less than $p$ generators.
$g(w)$ is defined only when $w$ is not the identity element on $S$.

We now define $g(n) = \max \{g(w) | w \in B_n, w\not=1\}$, where
$B_n$ is the set of elements of $G$ that can be written as a product
of less than $n$ generators.
\end{definition}
$g(w)$ is well defined if $w\not=1$.
Indeed the set of all elements that can be written as above
is a normal subgroup of $S$ that is nontrivial (it contains $w$), 
and thus is equal to $S$.

Note that $g$ depends on the choices of generators of $S$, but it is
easy to see that different choices of $G$ only changes the function
upto a linear factor.

Then, in $S$:
\begin{itemize}
	\item The complement of the word problem on $S$ can be enumerated
	  from an enumeration of the word problem on $S$.
	\item The function $g$ can be computed from an enumeration of
	  the word problem on $S$.
	\item The word problem on $S$ can be enumerated from both the
	  complement of the word problem on $S$ and any bound $t$ on $g$.
	  Indeed $w = 1$ iff there exists a generator $a$ s.t. all
	  products of less than $t(|w|)$ terms of the form $hwh^{-1}$, where
	  each $h$ is the product of less than $t(|w|)$ generators, are
	  different from $a$.
  \end{itemize}	
This new function $g$ is therefore an equivalent of the quasi-periodicity
function for subshifts.

\section*{Conclusion}

A consequence of this work is the following: many results in algebra
assert that if a finitely presented structure has some property $P$,
then the structure is recursive.
The usual way  these results are done is by proving that
having property $P$ and being finitely presented imply that the
structure is both recursively enumerable and co-recursively enumerable.

However these results can be divided in two parts:
\begin{itemize}
    \item Either they are still valid when the structure is only
	  recursively presented instead of finitely presented. 
	  In which case, as presented here, the result can usually be
	  generalized to obtain a result that hold for any structure with
	  property $P$.
	  From the proof we also obtain that in this case a function $g$
	  can be attached to each structure, that gives additional
	  information about it. This is the case for example for minimal
	  subshifts (where we can attach the \emph{quasiperiodicity}
	  function).
	\item Or they do not generalize to recursively presented
	  structures, which means they really need the structure to be finitely
	  presented to be able to prove that the structure is
	  co-recursively enumerable.
	  In which case it is not clear how these results can be
	  generalized. 
          It is for example the case for residually finite
	  groups (finitely presented residually finite groups have a
          recursive word problem, but there are some recursively presented residually
	  finite groups that are not \cite{Dyson}), or for the analog
	  concept of subshifts whose periodic points are dense
	  \cite{HochMey} (a subshift of finite type whose periodic points
	  are dense has a recursive set of forbidden words, but this is
	  not true for an effectively closed subshift whose periodic
	  points are dense).
\end{itemize}

\section*{Open Questions}

This article presents how a few results in logic and symbolic
dynamics may be related once seen in the concept of universal algebra,
and how they can be generalized for structures that are not
recursively presented.

There exist other theorems which offer a striking similarity, but for
which a general statement is not known, most proeminently 
Higman's embedding theorem and Boone-Higman's theorem. We focus
here the discussion on the former theorem (The author claims he has a
proof of an equivalent of the Boone-Higman theorem for subshifts,
which will be found in a later paper).

\begin{theorem}[\cite{Higman}]
A finitely generated group can be embedded in a finitely presented
groups iff it has a recursively enumerable set of defining relations.
\end{theorem}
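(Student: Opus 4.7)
The forward direction is immediate from the Proposition of Section~2: if $G$ embeds in a finitely presented group $H = \langle b_1,\ldots,b_m \mid s_1,\ldots,s_k\rangle$ via $a_i \mapsto w_i(\bar b)$, then by that proposition applied to the quasivariety of marked groups, the word problem of $H$ is recursively enumerable. A word $u$ in the $a_i$ is trivial in $G$ iff $u(w_1,\ldots,w_n)$ is trivial in $H$, so pulling back along the embedding gives a recursively enumerable set of defining relations of $G$.

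For the converse I would follow Higman's original HNN strategy rather than try to extract it from the closure-space framework of this paper. Call a finitely generated subgroup $K$ of a free group $F$ \emph{benign} if the HNN extension of $F$ in which a stable letter commutes with $K$ embeds into a finitely presented group. The plan has three steps. First, establish the standard ``benign lemmas'': benign subgroups are preserved under intersection, under image and preimage along computable homomorphisms of free groups, under passing to finitely generated subgroups of the ambient, and under appropriate normal closures. These are purely algebraic manipulations of HNN extensions using Britton's lemma. Second, show that every recursively enumerable finitely generated subgroup of a free group is benign; the idea is to simulate a Turing machine enumerating the subgroup by a tower of HNN extensions, and then collapse it down to a finitely presented group using the benign lemmas of step one. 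Third, apply this to the presentation of $G$: if $G = F_n/\langle\langle R\rangle\rangle$ with $R$ recursively enumerable, then $\langle\langle R\rangle\rangle$ is recursively enumerable, and, modulo passing to finitely generated data, benign; quotienting the resulting HNN extension by its stable letter yields an embedding of $G$ into a finitely presented group.

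The main obstacle is the second step: it is the point at which arbitrary recursive enumeration of a set of words must be simulated inside a \emph{finite} set of group-theoretic relations. This step does not seem to be shortened by the quasivariety formalism of the present paper. The reason is essentially that an embedding produces new generators and new relators living outside the original index set~$I$, so ``embed into a finitely presented supergroup'' is not a construction internal to one closure space but a construction connecting the whole tower of marked-group quasivarieties of varying rank. Expressing this crossing of quasivarieties in a way compatible with enumeration reducibility is exactly the generalisation flagged as missing in the Open Questions section, which is why I would expect a ``framework proof'' of Higman's theorem to require first extending the vocabulary of this paper to an appropriate notion of morphism between quasivarieties.
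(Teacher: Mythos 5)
There is nothing in the paper to compare your proof against: this theorem is stated in the Open Questions section purely as a citation of \cite{Higman}, and the paper explicitly presents it as an example of a result whose proof remains ``tremendously combinatorial'' and is \emph{not} captured by the quasivariety/enumeration-reducibility framework. Your forward direction is fine and is indeed the one place the paper's machinery applies: the finitely presented point of the marked-groups quasivariety has r.e.\ word problem by the Proposition of Section~2, and pulling back along the (computable) embedding $a_i \mapsto w_i(\bar b)$ makes the kernel $F_n \to G$ r.e., which is an r.e.\ set of defining relations. Your closing diagnosis --- that an embedding leaves the ambient index set $I$ and so is not internal to a single closure space --- is also consistent with the paper's own commentary on why no framework proof is known.

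The converse, however, is only a roadmap of Higman's original argument, not a proof. The entire content of the theorem sits in your step two (every recursively enumerable subgroup of a finitely generated free group is benign), which you leave as an acknowledged black box; nothing in your proposal carries out the Turing-machine (or recursive-function) simulation by HNN towers, and the ``benign lemmas'' of step one are likewise only named. Two smaller inaccuracies in the sketch: benignity must be defined for arbitrary (not finitely generated) subgroups $K$ of a finitely generated group, since the object you need it for is the normal closure $\langle\langle R\rangle\rangle$, which is almost never finitely generated; and the final step as you state it fails --- killing the stable letter $t$ in $\langle F, t \mid t^{-1}nt = n,\ n \in N\rangle$ returns $F$, not $F/N$, and killing $t$ in the finitely presented overgroup need not leave the image of $F$ isomorphic to $F/N$ either. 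The standard conclusion instead stays inside the finitely presented overgroup and uses a further amalgam/HNN construction controlled by Britton's lemma. So as submitted this direction is incomplete, even though the plan you describe is the correct classical route.
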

\begin{theorem}[\cite{Kleene,CraigVaught}]
An arbitrary theory (with identity) is finitely axiomatisable using
additional predicates iff it is recursively axiomatisable.
\end{theorem}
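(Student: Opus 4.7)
The plan is to handle the two directions separately. For the easy direction (finite axiomatisability with additional predicates implies recursive axiomatisability): suppose $T$ in language $L$ is the restriction of a theory $T'$ finitely axiomatised in an extended language $L^{\ast} \supseteq L$. The theorems of $T'$ are recursively enumerable since $T'$ has finitely many axioms, and the set of $L$-consequences of $T'$, which equals the set of theorems of $T$, is therefore r.e. Hence $T$ is recursively axiomatisable.

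For the hard direction, we are given a theory $T$ in $L$ with a recursively enumerable set of axioms, enumerated by a Turing machine $M$. The strategy is to produce a finite conservative extension $T^{\ast}$ in an enriched language $L^{\ast}$. We extend $L$ with predicate symbols sufficient to: (a) interpret a copy of Robinson's arithmetic $Q$; (b) encode the syntax of $L$-formulas via their G\"odel numbers; (c) describe step-by-step computations of $M$ on the interpreted $\NN$; and (d) express a satisfaction predicate $\mathrm{True}_L(n)$ for $L$-formulas, defined by recursion on formula complexity over the internal arithmetic. The finitely many axioms of $T^{\ast}$ then assert that these predicates behave correctly, i.e.\ the arithmetic axioms of $Q$, the syntax-coding axioms, the transition axioms for $M$ (so that $\mathrm{Ax}(n)$ holds iff $M$ ever outputs $n$), and a single universal axiom $\forall n\,(\mathrm{Ax}(n) \to \mathrm{True}_L(n))$.

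Conservativity is verified by starting from any $\mathcal{M} \models T$, adjoining a disjoint copy of $\NN$ to interpret the arithmetic, defining the syntax and machine predicates by their intended meanings, and defining $\mathrm{True}_L$ by the Tarski clauses over $\mathcal{M}$. This yields an expansion satisfying $T^{\ast}$; conversely any $L$-consequence of $T^{\ast}$ must hold in $\mathcal{M}$ since $\mathrm{True}_L$ agrees with actual truth in $\mathcal{M}$ on $L$-sentences. Thus $T^{\ast}$ is a finite axiomatisation of a conservative extension of $T$ in the enriched language, which is exactly what the theorem demands.

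The main obstacle is part (d): giving a satisfaction predicate for $L$ using only \emph{finitely} many axioms in $L^{\ast}$. Tarski's undefinability theorem forbids defining the truth of $L^{\ast}$-formulas inside $L^{\ast}$, but we only need truth for formulas of the original language $L$, which is expressible in a proper extension as long as the arithmetic interpreted by the new predicates is strong enough to carry out induction on formula complexity. Encoding the Tarski recursion clauses as one single axiom, rather than as a schema indexed by connectives and quantifiers, is the technical crux; here one exploits the fact that formulas of $L$ have finitely many syntactic constructors, so each clause is itself a finite statement, and the universal quantification over subformulas is handled by the internal arithmetic. Once this is in place, the argument glues together the classical ingredients (Craig's trick, the arithmetisation of metamathematics, and internal Tarskian satisfaction) in a straightforward way.
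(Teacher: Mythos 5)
First, note that the paper does not prove this statement at all: it appears in the Open Questions section as a cited classical result (Kleene; Craig--Vaught), listed only as an analogue of Higman's embedding theorem, so your proposal can only be measured against the classical proof. Your outline does reproduce its broad shape (Craig's trick for the easy direction; arithmetisation of the enumerating machine, G\"odel coding of syntax, an internal satisfaction predicate for the base language, and a single axiom ``every axiom is true'' for the hard direction). But there is a genuine gap in the conservativity argument: ``finitely axiomatisable using additional predicates'' means the new theory $T^{\ast}$ lives in a language enriched by \emph{predicate symbols only}, so models of $T^{\ast}$ are expansions of $L$-structures over the \emph{same universe}, and the $L$-reduct of a model of $T^{\ast}$ has the same domain. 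You cannot ``adjoin a disjoint copy of $\NN$'' to a model of $T$: that changes the domain, and the reduct need no longer satisfy $T$ (e.g.\ any cardinality assertions of $T$ can fail), so the $L$-consequences of such a $T^{\ast}$ would not be $T$. The whole difficulty of Kleene and Craig--Vaught is precisely that the arithmetic, the coding of computations, and the satisfaction machinery must be carved out \emph{inside the given model} by new predicates; this is where the hypothesis that one works with identity and (in Craig--Vaught's treatment) with infinite models enters, and it is the part your argument skips.

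Two further points need repair. A unary truth predicate $\mathrm{True}_L(n)$ on G\"odel numbers of sentences cannot be axiomatised by Tarski clauses directly: the quantifier clause forces you to use a satisfaction relation $\mathrm{Sat}(n,s)$ between formula codes and assignments, and assignments are finite sequences of \emph{model elements}, so you also need a finitely axiomatised coding of such sequences over the model's own universe --- a nontrivial piece of the construction. Finally, since $T^{\ast}$ is finite you cannot force the internal arithmetic to be standard, so the axiom ``$\mathrm{Ax}(n)$ iff $M$ outputs $n$'' cannot be literally secured; the correct argument shows, for each actual axiom $\varphi$ of $T$ with standard code $n$, that $T^{\ast}$ proves $\mathrm{Ax}$ of the corresponding numeral-like element (by tracing the finitely many computation steps) and then, by a meta-level induction on the parse tree of $\varphi$ through the Tarski clauses, that the reduct satisfies $\varphi$. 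Without this direction you have not shown that $T^{\ast}$ entails $T$, only (modulo the domain problem above) that it entails nothing more.
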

\begin{theorem}[\cite{Hochman2}, see also \cite{Aubrun2,DRS3}]
A subshift is the subaction of a (projection) of a shift
of finite type iff
it is effectively closed.
\end{theorem}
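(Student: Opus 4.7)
The plan is to prove the two directions separately. For the easy direction, I would show that any subaction of a sofic shift is effectively closed. A sofic shift $Y \subseteq \Sigma^{\mathbb{Z}^{d+k}}$ is by definition a letter-to-letter factor of an SFT $Z$, whose forbidden language is finite and hence recursive; by compactness, the forbidden language of $Y$ is therefore recursively enumerable. Restricting to a $\mathbb{Z}^{d}$-subaction, a $d$-dimensional pattern $w$ is forbidden iff no $(d+k)$-dimensional extension of $w$ lies in $Y$, which by compactness is equivalent to saying that some finite extension of $w$ contains a forbidden pattern of $Y$ — again a recursively enumerable condition. This sits comfortably inside the framework of earlier sections: effective closedness is precisely recursive presentability in the subshift quasivariety, and the argument above is an explicit enumeration reduction from the forbidden language of the subaction to that of the ambient sofic shift.

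For the hard direction, given an effectively closed $d$-dimensional subshift $X$ with recursively enumerable forbidden language, I would construct a $(d+k)$-dimensional SFT $Z$ whose $\mathbb{Z}^{d}$-subaction projects onto $X$; then projecting $Z$ onto its $X$-coordinate gives a sofic shift whose $d$-dimensional subaction is $X$. I would follow the fixed-point style of Durand--Romashchenko--Shen. Fix a Turing machine $M$ enumerating the forbidden patterns of $X$, and build a self-simulating tileset in the $k$ auxiliary dimensions: valid tilings are organised into nested macro-tiles at scales $N, N^{2}, N^{3}, \dots$, each macro-tile running a simulation of $M$ with progressively more time and space. The SFT rules are arranged so that if $M$ ever enumerates a forbidden pattern $p$, the $X$-layer at the corresponding location is prevented from containing $p$. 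Any valid configuration of $Z$ then has its $X$-layer lying in $X$, and conversely the plasticity of the hierarchy lets us extend every $x \in X$ to a valid configuration of $Z$.

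The main obstacle is the self-similar tileset construction itself: one needs a Kleene-style fixed-point argument so that the tileset simulates its own transition rules at the next scale, carefully interleaved with the simulation of $M$. Equally delicate is verifying surjectivity onto $X$ — the auxiliary dimensions must be \emph{flexible} enough to host every legal $X$-configuration while being \emph{rigid} enough to forbid every illegal one. Both parts are quantitative in nature and do not seem to follow from the closure-space abstraction of this paper; this is presumably why the theorem is flagged here as an analog of the results developed earlier for which no purely axiomatic proof is yet known.
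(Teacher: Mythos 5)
This statement is not proved in the paper at all: it appears in the Open Questions section as a quotation of Hochman's theorem (with the later strengthenings of Aubrun--Sablik and Durand--Romashchenko--Shen), cited precisely as an example of a ``tremendously combinatorial'' result that the closure-space framework of the paper does not capture. So there is no in-paper argument to compare yours with; your proposal has to stand or fall as a reconstruction of the literature's proof, and as such it is an outline rather than a proof. The easy direction is essentially right: soficity gives a recursively enumerable forbidden language, and compactness turns ``no extension to the ambient configuration space'' into an existential statement over finite extensions, hence the subaction is effectively closed (and, as you note, this is an enumeration reduction in the spirit of the earlier sections).

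The hard direction, however, is where the entire content of the theorem lives, and your sketch delegates it wholesale to the fixed-point/self-simulation machinery without supplying any of it. Two concrete gaps: first, the theorem asserts the symbolic system is \emph{isomorphic} to the subaction, so the SFT must force the $X$-layer to be constant along the $k$ auxiliary directions (otherwise the $\ZZ^d$-subaction of the projection is not conjugate to $X$ but some larger system); your construction never imposes or uses this, yet it is what makes the simulation delicate, since each macro-tile can only see a bounded window of a row that must nevertheless be checked against arbitrarily large forbidden patterns enumerated by $M$. Second, the claim that ``the plasticity of the hierarchy lets us extend every $x \in X$ to a valid configuration of $Z$'' is exactly the hard surjectivity statement in Hochman and in Durand--Romashchenko--Shen, proved there by a careful analysis of degenerate and non-degenerate hierarchical structures; asserting it is not proving it. As written, your text is a correct road map to the known proofs, but it does not constitute a proof, and it does not (nor does the paper claim one could) derive the result from the quasivariety/enumeration-reducibility framework.
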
	
The note by \cite{Kuznetsov} also suggests an analog for universal
algebras, and a similar theorem for semigroups also exist.
Note also that the Relative Higman Embedding Theorem
\cite{HigmanScott} also has an equivalent in the domain of subshifts
\cite{AubrunS09}.

Proofs of these theorems are tremendously combinatorial, as each proof needs
to embed a Turing machine (or another computational device) into an
algebraic system, and the methods to do this are quite different.
However the fact remains that all these theorems have similar
hypotheses and conclusions, so that either it is a striking
coincidence, or something deep can be found here.

\section*{Acknowledgements}
The author thanks Dmitry Sokolov for helping him with the
translation of \cite{Kuznetsov}, and Yves de Cornulier for valuable
comments on a first draft of this article.


\begin{thebibliography}{10}

\bibitem{AubrunS09}
Nathalie Aubrun and Mathieu Sablik.
\newblock An order on sets of tilings corresponding to an order on languages.
\newblock In {\em 26th International Symposium on Theoretical Aspects of
  Computer Science, STACS 2009, February 26-28, 2009, Freiburg, Germany,
  Proceedings}, pages 99--110, 2009.

\bibitem{Aubrun2}
Nathalie Aubrun and Mathieu Sablik.
\newblock Simulation of effective subshifts by two-dimensional subshifts of
  finite type.
\newblock {\em Acta Applicandae Mathematicae}, 2013.

\bibitem{ballier2008}
Alexis Ballier and Emmanuel Jeandel.
\newblock {Tilings and Model Theory}.
\newblock In {\em Symposium on Cellular Automata Journ{\'e}es Automates
  Cellulaires (JAC)}, pages 29--39, Moscow, 2008. MCCME Publishing House.

\bibitem{ballier2010}
Alexis Ballier and Emmanuel Jeandel.
\newblock {Computing (or not) quasi-periodicity functions of tilings}.
\newblock In {\em Symposium on Cellular Automata (JAC)}, pages 54--64, 2010.

\bibitem{Belegradek}
Oleg~V. Belegradek.
\newblock {Algebraically Closed Groups}.
\newblock {\em Algebra and Logic}, 13(3):135--143, May 1974.

\bibitem{Belegradek2}
Oleg~V. Belegradek.
\newblock {Higman's Embedding Theorem in a General Setting and Its Application
  to Existentially Closed Algebras}.
\newblock {\em Notre Dame Journal of Formal Logic}, 37(4):613--624, 1996.

\bibitem{BooneV}
William~W. Boone.
\newblock {Certain Simple, Unsolvable Problems of Group Theory V}.
\newblock {\em Indagationes Mathematicae}, 60:22--27, 1957.

\bibitem{BooneVI}
William~W. Boone.
\newblock {Certain Simple, Unsolvable Problems of Group Theory VI}.
\newblock {\em Indagationes Mathematicae}, 60:227--232, 1957.

\bibitem{BooneHigman}
William~W. Boone and Graham Higman.
\newblock {An algebraic characterization of groups with soluble word problem}.
\newblock {\em Journal of the Australian Mathematical Society}, 18(1):41--53,
  August 1974.

\bibitem{CClo}
Douglas Cenzer, Peter Clote, Rick~L. Smith, Robert~I. Soare, and Stanley~S.
  Wainer.
\newblock {Members of countable $\Pi_1^0$ classes}.
\newblock {\em Annals of Pure and Applied Logic}, 31:145--163, 1986.

\bibitem{CenzerRec}
Douglas Cenzer and Jeffrey~B. Remmel.
\newblock {$\Pi_1^0$ classes in mathematics}.
\newblock In {\em Handbook of Recursive Mathematics - Volume 2: Recursive
  Algebra, Analysis and Combinatorics}, volume 139 of {\em Studies in Logic and
  the Foundations of Mathematics}, chapter~13, pages 623--821. Elsevier, 1998.

\bibitem{Cenzerbook}
Douglas Cenzer and Jeffrey~B. Remmel.
\newblock {\em {Effectively Closed Sets}}.
\newblock ASL Lecture Notes in Logic, 2015.
\newblock in preparation.

\bibitem{CraigVaught}
W.~Craig and R.~L. Vaught.
\newblock {Finite Axiomatizability Using Additional Predicates}.
\newblock {\em The Journal of Symbolic Logic}, 23(3):289--308, September 1958.

\bibitem{Cornulier}
Yves de~Cornulier, Luc Guyot, and Wolfgang Pitsch.
\newblock {On the isolated points in the space of groups}.
\newblock {\em Journal of Algebra}, 307:254--277, 2007.

\bibitem{DFN}
Rodney~G. Downey, G.~LaForte, and A.~Nies.
\newblock {Computably enumerable sets and quasi-reducibility}.
\newblock {\em Annals of Pure and Applied Logic}, 95(1--3):1--35, 1998.

\bibitem{DRS3}
Bruno Durand, Andrei Romashchenko, and Alexander Shen.
\newblock {Effective Closed Subshifts in 1D Can Be Implemented in 2D}.
\newblock In {\em Fields of Logic and Computation}, number 6300 in Lecture
  Notes in Computer Science, pages 208--226. Springer, 2010.

\bibitem{FriedbergRogers}
Richard~M. Friedberg and Hartley Rogers.
\newblock {Reducibility and Completeness for Sets of Integers}.
\newblock {\em Zeitschrift für mathematische Logik und Grundlagen der
  Mathematik}, 5:117--125, 1959.

\bibitem{Grigorchuk}
R.~I. Grigorchuk.
\newblock {Degrees of Growth of Finitely Generated Groups, and the Theory of
  Invariant Means}.
\newblock {\em Mathematics of the USSR-Izvestiya}, 25(2):259--300, 1985.

\bibitem{Higman}
Graham Higman.
\newblock {Subgroups of Finitely Presented Groups}.
\newblock {\em Proceedings of the Royal Society of London. Series A,
  Mathematical and Physical Sciences}, 262(1311):455--475, August 1961.

\bibitem{HigmanScott}
Graham Higman and Elizabeth Scott.
\newblock {\em {Existentially Closed Groups}}.
\newblock Oxford University Press, 1988.

\bibitem{Hochman2}
Michael Hochman.
\newblock On the dynamics and recursive properties of multidimensional symbolic
  systems.
\newblock {\em Inventiones Mathematicae}, 176(1):2009, April 2009.

\bibitem{HochMey}
Michael Hochman and Tom Meyerovitch.
\newblock {A characterization of the entropies of multidimensional shifts of
  finite type}.
\newblock {\em Annals of Mathematics}, 171(3):2011--2038, May 2010.

\bibitem{Dyson}
Verena {Huber Dyson}.
\newblock {A Family of Groups with Nice Word Problems}.
\newblock {\em Journal of the Australian Mathematical Society}, 17(4):414--425,
  June 1974.

\bibitem{Janiczak}
Antoni Janiczak.
\newblock A remark concerning decidability of complete theories.
\newblock {\em Journal of Symbolic Logic}, 15(4):277--279, December 1950.

\bibitem{PISFT}
Emmanuel Jeandel and Pascal Vanier.
\newblock {Turing degrees of multidimensional SFTs}.
\newblock {\em Theoretical Computer Science}, 505:81--92, 2013.

\bibitem{Kleene}
S.C. Kleene.
\newblock {\em Two Papers on the Predicate Calculus.}, chapter {Finite
  Axiomatizability of Theories in the Predicate Calculus Using Additional
  Predicate Symbols}.
\newblock Number~10 in Memoirs of the American Mathematical Society. American
  Mathematical Society, 1952.

\bibitem{Kuznetsov}
A.V. Kuznetsov.
\newblock {Algorithms as operations in algebraic systems}.
\newblock {\em Uspekhi Matematicheskikh Nauk}, 13(3(81)):240--241, 1958.
\newblock (in russian).

\bibitem{LindMarcus}
Douglas~A. Lind and Brian Marcus.
\newblock {\em An Introduction to Symbolic Dynamics and Coding}.
\newblock Cambridge University Press, New York, NY, USA, 1995.

\bibitem{MacIntyre}
Angus MacIntyre.
\newblock The word problem for division rings.
\newblock {\em Journal of Symbolic Logic}, 38(3):428--436, December 1973.

\bibitem{Maltsev}
A.I. Mal'tsev.
\newblock {Constructive Algebras I}.
\newblock {\em Russian Mathematical Surveys}, 16(3):77--129, 1961.

\bibitem{Closure}
Norman~M. Martin and Stephen Pollard.
\newblock {\em {Closure Spaces and Logic}}.
\newblock Number 369 in Mathematics and its Applications. Kluwer, 1996.

\bibitem{Novikov}
P.S. Novikov.
\newblock {On the algorithmic unsolvability of the word problem in group
  theory}.
\newblock {\em Trudy Mat. Inst. Steklov}, 44:3--143, 1955.

\bibitem{Odifreddi}
P.G. Odifreddi.
\newblock {\em {Classical Recursion Theory Volume I}}, volume 125 of {\em
  Studies in Logic and The Foundations of Mathematics}.
\newblock North Holland, 1989.

\bibitem{Odifreddi2}
P.G. Odifreddi.
\newblock {\em {Classical Recursion Theory Volume II}}, volume 143 of {\em
  Studies in Logic and The Foundations of Mathematics}.
\newblock North Holland, 1999.

\bibitem{Post}
Emil~L. Post.
\newblock Recursive unsolvability of a problem of thue.
\newblock {\em Journal of Symbolic Logic}, 12(1):1--11, March 1947.

\bibitem{RobinsonQ}
Raphael~M. Robinson.
\newblock An essentially undecidable axiom system.
\newblock In {\em Proceedings of the International Congress of Mathematicians},
  pages 729--730, 1950.

\bibitem{Robinson}
Raphael~M. Robinson.
\newblock {Undecidability and Nonperiodicity for Tilings of the Plane}.
\newblock {\em Inventiones Mathematicae}, 12(3):177--209, 1971.

\bibitem{Salo4}
Ville Salo.
\newblock {Decidability and Universality of Quasiminimal Subshifts}.
\newblock arXiv:1411.6644.

\bibitem{Selman}
Alan~L. Selman.
\newblock {Arithmetical Reducibilities I}.
\newblock {\em Zeitschrift für mathematische Logik und Grundlagen der
  Mathematik}, 17:335--350, 1971.

\bibitem{Tarski2}
Alfred Tarski.
\newblock {Fundamentale Begriffe der Methodologie der deduktiven
  Wissesnchaften}.
\newblock {\em Monatshefte {f\"ur} Mathematik und Physik}, 37:361--404, 1930.
\newblock \iflanguage{french}{traduction dans}{transl. in}
  \cite[Chap~V]{TarskiBook}.

\bibitem{Tarski1}
Alfred Tarski.
\newblock {{\"U}ber einige fundamentale Begriffe der Metamathematik}.
\newblock {\em Comptes Rendus des séances de la Société des Sciences et des
  Lettres de Varsovie}, 23:22--23, 1930.
\newblock \iflanguage{french}{traduction dans}{transl. in}
  \cite[Chap~III]{TarskiBook}.

\bibitem{TarskiBook}
Alfred Tarski.
\newblock {\em {Logic, Semantics, Metamathematics}}.
\newblock Oxford, 1956.

\bibitem{Ziegler1}
Martin Ziegler.
\newblock {Gruppen mit vorgeschriebenem Wortproblem}.
\newblock {\em Mathematische Annalen}, 219(1):43--51, 1976.

\bibitem{Ziegler2}
Martin Ziegler.
\newblock Algebraisch abgeschlossene gruppen.
\newblock In Sergei~I. Adian, William~W. Boone, and Graham Higman, editors,
  {\em Word Problems II}, volume~95 of {\em Studies in Logic and the
  Foundations of Mathematics}, pages 449--576. North Holland, 1980.

\end{thebibliography}

\appendix
\section{Appendix}
\subsection{Proof of Theorem \ref{thm:pi01}}
\begin{theorem2}[Theorem \ref{thm:pi01}]
	A $\Pi_1^0$ class $S \subseteq \{0,1\}^I$ is a quasivariety iff it
	contains $I$ and is closed under (finite) intersections.
\end{theorem2}	
\begin{proof}
	One direction has already been stated as a fact above.
	
	Now suppose $S$ is a $\Pi_1^0$ class which contains $I$ and is
	closed under (finite) intersection.
	
    Let $\cal F$ be the collection of all partial maps $(f_i)_{i \in
	  \mathbb{N}}$ where $f_i \in \{0,1\}^{F_i}$, with $F_i$ finite, which disagree with every element of $S$.
	
	By definition of a $\Pi_1^0$ class, $\cal F$ is recursively
	enumerable, and every element not in $\cal F$ agrees with at least
	one point of $S$.

	Now let ${\cal F'}$ be the restriction of $\cal F$ to partial maps
	that takes value $0$ in exactly one point. ${\cal F'}$ is also
	recursively enumerable, and the $\Pi_1^0$ class defined by ${\cal
	  F}'$ is by definition a quasivariety $V$. It is clear that $S \subseteq V$, we now prove that
	they are equal. 
	
	For this, suppose $x \in V \setminus S$.   
    Then there exists a map $f \in {\cal F} \setminus {\cal F}'$
	that agrees with $x$.
	
	As the whole set $I$ is in $S$, no partial map taking only the
	value $1$ can be in ${\cal F}$, hence  $f$ must take value $0$ in at least one point.
	
	Let $A$ be the (possibily empty) set of positions where $f$
	takes value $1$, and $B$ the set of positions where $f$ takes
	value $0$.
	As $f \not\in {\cal F}'$, $|B| \geq 2$.

	For each $b \in B$ consider the map $f_b$ defined on $A \cup
	\{b\}$ and taking value $1$ on $A$ and $0$ on $\{b\}$.
	
    Note that $x$ agrees with every map $f_b$, and each such map
	takes value $0$ in exactly one point.   
	As a consequence, none of the map $f_b$ is in ${\cal F}$
	(otherwise it would be in ${\cal F}'$).
	Therefore, for each $b$, there exists a point $y_b \in S$ that
	agrees with $f_b$.
	
	But then $\bigcap_b y_b$ is a point of $S$ that agrees with $f$, a
	contradiction.
\end{proof}

\end{document}